\newcolumntype{P}[1]{>{\centering\arraybackslash}p{#1}}
\newtheorem{definition}{Definition}
\newtheorem{theorem}{Theorem}
\let\newfloat\newfloat@ltx 
\newcommand{\etal}{\textit{et al.}}
\newcommand{\algo}{QC-Forest}
\newtheorem{lemma}{Lemma}
\begin{document}
\title{QC-Forest: a Classical-Quantum Algorithm to Provably Speedup Retraining of Random Forest} 

\author{Romina Yalovetzky}
\email{romina.yalovetzky@jpmorgan.com}
\affiliation{Global Technology Applied Research, JPMorgan Chase, New York, NY 10017}
\author{Niraj Kumar}
\affiliation{Global Technology Applied Research, JPMorgan Chase, New York, NY 10017}
\author{Changhao Li}
\affiliation{Global Technology Applied Research, JPMorgan Chase, New York, NY 10017}
\author{Marco Pistoia}
\affiliation{Global Technology Applied Research, JPMorgan Chase, New York, NY 10017}
\date{\today}

\begin{abstract}

Random Forest (RF) is a popular tree-ensemble method for supervised learning, prized for its ease of use and flexibility. Online RF models require to account for new training data to maintain model accuracy. This is particularly important in applications where data is periodically and sequentially generated over time in data streams, such as auto-driving systems, and credit card payments. In this setting, performing periodic model retraining with the old and new data accumulated is beneficial as it fully captures possible drifts in the data distribution over time. However, this is unpractical with state-of-the-art classical algorithms for RF as they scale linearly with the accumulated number of samples. We propose QC-Forest, a classical-quantum algorithm designed to time-efficiently retrain RF models in the streaming setting for multi-class classification and regression, achieving a runtime poly-logarithmic in the total number of accumulated samples. QC-Forest leverages Des-q, a quantum algorithm for single tree construction and retraining proposed by Kumar \etal~by expanding to multi-class classification, as the original proposal was limited to binary classes, and introducing an exact classical method to replace an underlying quantum subroutine incurring a finite error, while maintaining the same poly-logarithmic dependence. Finally, we showcase that \algo~achieves competitive accuracy in comparison to state-of-the-art RF methods on widely used benchmark datasets with up to 80,000 samples, while significantly speeding up the model retrain.

%QC-Forest leverages Des-q, a quantum algorithm for single tree construction and retraining proposed by Kumar \etal~. by expanding to multi-class classification, as the original proposal was limited to binary classification, and replacing an underlying quantum subroutine incurring a finite error with a classical one that performs the exact calculation, while maintaining the poly-logarithmic retraining runtime dependence. Finally, we showcase that \algo~achieves competitive accuracy in comparison to state-of-the-art RF methods on widely used benchmark datasets with up to 80,000 samples, while significantly speeding up the model retrain. 

%In the domain of big data, building and frequently retraining this model is prohibitively slow in time because its complexity scales polynomial in the number of training samples. In this work, we propose \algo: the hybrid quantum-classical version of random forest. The key novelty is that in the setting of retraining with incremental batches of data, \algo~runs in time poly-logarithmic in the number of training samples. This brings about the opportunity to frequently retrain random forest models with large data and thus avoid concept drift and maintain consistent model performance. \algo~implements a classical-quantum version of weighted k-means clustering to create and retrain the underlying trees for regression and classification of general datasets. We showcase that \algo~has competitive performance against state-of-the-art random forest methods for standard benchmark datasets up to $80000$ of samples.     

\end{abstract}

\maketitle

\section{Introduction \label{section:intro}}
\label{intro}

%\CL{[please comment out this paragraph after reading] Please let me know by symphony if there is anything I can help with the revision etc. of the manuscript, as I recently don't have access to my work email and LVDI and cannot join meetings via zoom (only symphony working). Thanks!}

The current era of big data accumulation brings about opportunities to advance data mining methods and tap into data patterns at a scale  which was not possible before \cite{sagiroglu2013big}. However, this also comes at the cost of the models requiring massive computational resources to train given the large training data used. A widely popular class of data mining models in the context of supervised learning is decision tree-based models, including random forest (RF) \cite{breiman2001random} and boosting algorithms \cite{chen2015xgboost, solomatine2004adaboost, drucker1995boosting}. These models are defined in the context of ensemble learning for classification and regression by combining the output of multiple independent trees, also called tree-bagging. While conceptually simple, they have consistently provided competitive performance across a variety of supervised learning tasks \cite{parmar2019review}. 

% Training machine learning (ML) models with large data brings about many benefits. It maximizes the accuracy. Catlett \cite{catlett1991mega} has shown that halving the size of the training set produced a statistical significant decrease in accuracy. Moreover, Domingos and Hulten \cite{domingos2000mining} has shown with numerical simulations how the accuracy improves as the number of training examples increases. What is more, the ability of a model to be trained with large data allows for retraining with the the union of the already used data and new batches. The model retraining is crucial to avoid concept drift \cite{grbovic2011tracking} over time.

Despite its widespread success, constructing each tree in the ensemble is computationally resource-intensive, especially for large datasets. State-of-the-art techniques for constructing each tree are based on threshold feature splitting \cite{breiman2017classification,Lewis2000AnIT, quinlan1986induction, quinlan2014c4}, whose algorithmic complexity scales polynomially with the number of training samples. Once the model is trained and put \textit{online}, it needs to be updated to account for new training data in order to maintain model accuracy. This is particularly important in applications where data is periodically and sequentially generated over time in the form of data streams \cite{lu2018learning, zhou2023ods, wang2022learngene}. Some practically-relevant scenarios with data streams include auto-driving systems, weather forecasts, and credit card payments. In this scenario, machine learning (ML) models are susceptible to degradation in the performance due to drift in the concept or distribution of the dataset over time \cite{tsymbal2004problem}, requiring regularly update to account for the new information.  Consequently, \emph{static} models trained solely on historical data may fail to provide accurate prediction on new test samples \cite{li2022ddg}. 

In the literature of classical ML, several techniques have been developed to effectively account for new information. These include \emph{data subsampling techniques} \cite{shaw1990distributed}, where the model is trained over subsamples of data; \emph{incremental learning} \cite{domingos2000mining, manapragada2018extremely}, which is an online learning algorithm to update the model parameters using only the new training data. Although these models take new information into consideration, they usually suffer from concept drift in data streams as they limit to considering only subsets of data. The most effective way of addressing this problem is with \emph{periodic retraining with incremental batches of data}, where the model is regularly retrained with a combination of old and new data. This approach is particularly beneficial as it fully captures the drift in the data distribution over time and prevents the model from performance degradation over time \cite{domingos2000mining}. However, frequent retraining requires periodic data accumulation and as the retraining of the models scales linearly on the number of the accumulated samples, it quickly becomes time-consuming and therefore unpractical.

Paralleling development in classical ML, the emergence of quantum computing has opened up the opportunity to speedup many computational tasks. In particular, ML has been identified as an area that could benefit significantly from quantum technology, with multiple quantum algorithms having been proposed to speed up ML tasks, such as classification and distribution learning \cite{harrow2009quantum,lloyd2013quantum,kerenidis2016quantum, chakrabarti2022universal, montanaro2016quantum}. A sub-area of quantum ML algorithms are the variational algorithms, which have been identified as potential candidates for solving practically-relevant use cases in ML on the so-called noisy intermediate-scale quantum (NISQ) devices \cite{preskill2018quantum}. Despite the widespread efforts, many challenges have been identified for prospects of quantum advantage in the ML space. To name a few, variational algorithms suffer from barren plateaus \cite{larocca2024review, fontana2023adjoint, mcclean2018barren, cerezo2021cost, wang2021noise, martin2023barren} constraining their practical usage. It has also been conjectured and widely believed that the commonly used quantum variational models with a provable absence of barren plateaus could also be classically simulatable \cite{cerezo2023does}. Another challenge is that some quantum algorithms for ML require access to quantum random access memory (QRAM) capabilities \cite{giovannetti2008quantum}. While QRAM would allow for
quantum query of the data in logarithmic time, the proposed quantum-accessible classical data structures to construct QRAM from scratch would incur a time polynomial in the data size. Thus 
exponential speedup with quantum algorithms is not achievable if one accounts for the time for the creation of this specialized data structure \cite{lloyd2014quantum, kerenidis2016quantum, kerenidis2019q}. %What is more, some quantum algorithms require to recover the classical representation of the solution from the quantum state, and this is done by sampling from the quantum state a number of times that is polynomial on its size, also destroying the speedup in the end-to-end algorithm. 
Furthermore, some quantum algorithms provide a speedup only when the task is to recover the output quantum state or to extract few properties of the state such as its overlap with a target state. A common example is the well-known HHL algorithm for solving linear systems of equations \cite{harrow2009quantum}. However, if the task is to recover the entire classical representation of the state, this requires performing a quantum state tomography which eliminates any speedup offered by the original algorithm \cite{aaronson2015read}.

These aforementioned challenges have left open the question of how quantum computing can concretely address needs in the ML domain and how the community should think about quantum advantage in ML \cite{cerezo2023does, cerezo2022challenges, schuld2022quantum}. Aiming for a true end-to-end theoretical speedup over classical ML methods, we introduce \algo, a hybrid classical-quantum algorithm designed to construct and time-efficiently retrain RF models with periodic incremental batches of data. Previous quantum algorithms using Grover's search \cite{grover1996fast} have targeted the speedup of the underlying tree construction in RF models, offering a quadratic improvement based on the number of features ($d$) in the dataset \cite{grover1996fast, Lu2013QIP, Khadiev2019arxiv, Heese2022Quantum}. However, in big data contexts and with periodic retraining with accumulated data, the primary bottleneck is typically the number of training examples, $N$, which can be of the order of billions. Therefore, a meaningful speedup should focus on reducing the complexity relative to $N$. 

\subsection{Novel Contributions}
QC-Forest is designed to tackle the  aforementioned issues by constructing and subsequently retraining the underlying trees in the ensemble with a runtime scaling logarithmically with $N$. Our algorithm leverages the methodology of Des-q \cite{kumar2023desq} proposed to construct and retrain single trees. We elaborate on top of Des-q to build the underlying trees in the ensemble. Des-q uses a supervised version of the quantum $k$-means algorithm \cite{kerenidis2019q} to construct and retrain individual trees, employing clustering to create child nodes from the root to the maximum desired depth. This supervised clustering uses a weighted distance in the $k$-means objective function, where the weights are normalized relevancy coefficients of each feature vector with the target label vector, efficiently estimated on a quantum computer. Des-q requires a quantum superposition query to the data elements, necessitating a quantum-accessible classical data structure like the KP-tree \cite{kerenidis2016quantum} to query data elements in superposition. Creating this data structure from scratch is time-consuming, scaling linear in $N$. This is precisely the reason of why Des-q, and also our QC-Forest algorithm, take time linear in $N$ for the initial model construction. Nevertheless, once the data has been loaded into the data structure and the model is constructed and put online, retraining with the old and new small batch of data $N_{\text{new}}$, which means training with $N + N_{\text{new}}$ data samples, is exponentially faster in comparison to classical standard methods, assuming $N_{\text{new}} \ll N$. This efficiency results from the fact that updating the quantum-accessible data structure takes time linear in $N_{\text{new}}$, while running the algorithm itself takes time logarithmic in $N$. This assumption is realistic in many applications in the industrial setting where data is periodically and sequentially generated over time in data streams. This is the case of auto-driving systems and credit card payments where the models are trained and make online and as time passes, and labelled data accumulates, periodic retraining with the accumulated data is possible. 

We leverage Des-q to create an ensemble RF model that is suitable for construction and time-efficient retraining for a variety of tasks: binary and multi-class classification and regression. For this, we make key improvements in the original algorithm to overcome its limitations and address unanswered questions. The enhancements in QC-Forest include:

\begin{enumerate}
    \item We extend Des-q to tackle multi-class classification problems by successfully tackling the following challenges: (a) we efficiently estimate the feature weights beyond the binary classification case. Des-q utilizes the Pearson correlation, limiting to only binary classes. In this work we introduce a classical method to efficiently calculate $\eta$ coefficient \cite{levine2002eta}, which handles multi classes and (b) we efficiently estimate the probability of each of the multi classes in the leaf nodes of the trees on the quantum computer. Des-q does not estimate the probability of each of the two classes, it outputs the mean value among the labels. In the case of binary classification, this corresponds to the majority label as it assumes a threshold of $0.5$. This is unsuitable for datasets with two classes that are unbalanced and for multi-class classification. In this work we introduce a novel method to efficiently estimate the probabilities of each of the multi classes. 
    \item We introduce efficient classical methods to compute the feature weights exactly for $N + N_{\text{new}}$ samples, leveraging prior feature weight information from $N$ samples. Feature weights are crucial in the supervised clustering procedure during retraining of the underlying trees.  Our method takes time linear in  $N_{\text{new}}$ and is independent of $N$. In contrast, Des-q required a quantum computer to estimate feature weights with bounded error, adding logarithmic complexity in $N$ in addition to linear dependence in $N_{\text{new}}$.
    \item Introducing an algorithm in the setting of ensemble of trees, particularly random forest, that we name QC-Forest. We show with numerical simulations that the quality of the predictions are on par with classical standard methods. We utilize the classical implementation of the algorithm and perform extensive numerical analysis over different well-known datasets with different sizes up to $80000$ samples and multiple tasks: binary and multi-class classification and regression.

\end{enumerate}

The remaining of the article is organized as follows. We introduce the algorithmic components of \algo~by separating in two tasks: BUILD and RETRAIN respectively. In Section \ref{QC-Forest-BUILD}, we introduce \algo-BUILD that builds the RF model for the first time and Section \ref{QC-Forest-RETRAIN} elaborates on how the algorithm is used for retraining the online RF model to account for new data, this is named \algo-RETRAIN. Then, in Section \ref{algorithmic_ingredients} it is explained in detail the algorithmic ingredients of \algo. Section \ref{section:numerics} discusses the classical numerical simulations performed to assess the quality of the predictions expected with \algo~and finally Section \ref{section:conclusion} discusses the main results and concludes the article.

% Some quantum algorithms were introduced \cite{Lu2013QIP, Khadiev2019arxiv, Heese2022Quantum} but they do not improve the scaling with $N$ either tackle the retraining task itself. Kumar \etal  \cite{kumar2023desq} introduced the quantum algorithm Des-q for construction of a decision tree with $N$ examples and \textit{efficient} retraining with $N+N_{\text{new}}$. The efficiency in retraining is because it scales, in the number of examples, as $\emph{poly} \log(N)$, under the assumption of a  $N_{\text{new}}<<N$. 

\section{QC-Forest-BUILD: Initial Construction of RF Model  \label{QC-Forest-BUILD}}

\begin{algorithm}[H]
   \caption{QC-Forest-BUILD} \label{alg:build_algo}
\begin{algorithmic}
   \State {\bfseries Input:} Pre-processed data $D_{N}=\{(\boldsymbol{x}_i, y_i)\}_{i=1}^{N}$ where each $(\boldsymbol{x}_i, y_i) \in \mathbb{R}^{d} \times \mathcal{M}$, with $\mathcal{M} = \{l_1, \dots, l_{|\mathcal{M}|}\}$; $T$: the number of decision trees in the ensemble; $D$: the maximum depth of the trees, $k$: the number of clusters during each clustering step, and $K$: maximum number of clustering iterations.  

   \State {\bfseries Output:} For each $\text{tree}_n$, $n \in [T]$, the classical tree structure: the centroids $\{\boldsymbol{c}_{\text{node}} \in \mathbb{R}^{d}\}_{_{\text{node}}}$ corresponding to each internal node, and the leaf labels $\{label_{m}\}_{m \in [k^D]}$; the feature weight vectors $\{\boldsymbol{w}(\mathcal{S}_{n})\}_{n \in [T]}$ stored in classical memory; $\{\mathcal{S}_{n}\}_{n \in [T]}$ stored in quantum-accessible data structure.
   %and the leaf label information for each leaf node. For regression, this is the mean label value of examples in leaf node cluster $\{label_{j}\}_{j \in [k^D]}$, and for classification, it is the relative occurrence of each class in each of the leaf nodes $\{p_{jl}\}_{j \in [k^D], l \in [\mathcal{M}]}$; the feature weight vectors $\{W(\mathcal{S}_{n})\}_{n \in [T]}$; $\{\mathcal{S}_{n}\}_{n \in [T]}$ stored in quantum-accessible data structure.
    \For{$n=1\cdots T$} 
   \State {\bfseries Step 1 (a):} Randomly sample with replacement $\mathcal{S}_{n} \subset D_N$ training examples to be used to build the $n$-th tree $\text{tree}_n$. 
   \State {\bfseries Step 1 (b):} Classically calculate the feature weights using $\mathcal{S}_{n}$: $\boldsymbol{w}(\mathcal{S}_{n}) := \{w_j(\mathcal{S}_{n})\}_{j=1}^d$. 
   \State {\bfseries Step 1 (c):} Classically calculate the weighted centroids $\{\boldsymbol{c}_{lw}\}_{l=1}^{k}$ by doing element-wise multiplication between the centroids and the weights.
   \State {\bfseries Step 2:} Load $\mathcal{S}_{n}$ and the weighted centroids $\{\boldsymbol{c}_{lw}\}_{l=1}^{k}$ to the quantum-accessible data structure such that the data contents can be retrieved either classically or as quantum superposition. Store the feature weights $\boldsymbol{w}(\mathcal{S}_{n})$ in classical memory. 
   \State {\bfseries Step 3:} Construct the $\text{tree}_n$ by accessing the contents of $\mathcal{S}_{n}$ and $\{\boldsymbol{c}_{lw}\}_{l=1}^{k}$ as quantum-amplitude encoded states and perform the repeated supervised $q$-means to split each internal node into its children, which are the nodes of the next depth. The weighted centroids (or centroids themselves) $\{\boldsymbol{c}_{\text{node}}\}_{\text{node}}$ corresponding to the children nodes are stored in classical memory. 
   \State {\bfseries Step 4:} Repeat \textbf{Step 3} to grow the $\text{tree}_n$ up to the maximum depth $D$. 
   \State {\bfseries Step 5:} Classically estimate the label information of each leaf nodes of $\text{tree}_n$: $\{label_{m}\}_{m \in [k^D]}$. This corresponds to the probability of each class ($label_{m} = \{P_{i}\}_{i=0}^{|\mathcal{M}|}$) for classification and the mean value for regression. %or $\{p_{jl}\}_{j \in [k^D], l \in [\mathcal{M}]}$. 
   \EndFor
\end{algorithmic}
\end{algorithm}

We introduce QC-Forest-BUILD which constructs the initial RF model. Consider an initial dataset with $N$ numerical training examples: $D_{N}=\{(\boldsymbol{x}_i, y_i)\}_{i=1}^{N}$ where each $(\boldsymbol{x}_i, y_i) \in \mathbb{R}^{d} \times \mathcal{M}$, with $\mathcal{M} = \{l_1, \dots, l_c\}$. If some features are categorical, encoding techniques, such as hot-encoding, are utilized. The algorithm diagram is shown in Fig.\ref{fig:diagram} and the algorithm in pseudo-code in Algorithm \ref{alg:build_algo}. We elaborate on each step to construct each $n$-th tree in the ensemble below making reference to the steps shown in the figure. The construction of each tree can be done either in parallel or sequentially. Note that the construction and retraining of RF models are performed with QC-Forest and the evaluation of the models is done on the classical computer. The output of QC-Forest is the structure of the trees in the ensemble ready to be used for evaluation: the centroids $\{\boldsymbol{c}_{\text{node}} \in \mathbb{R}^{d}\}_{\text{node}}$ corresponding to each internal node, and the leaf labels $\{label_{j}\}_{j \in [k^D]}$. At this point, for the algorithmic complexities, we avoid reporting the scaling with constants and estimation errors but they are reported later in Theorem \ref{theo:complexity_retrain} and Section \ref{algorithmic_ingredients}. 

\begin{itemize}
    \item  \textbf{Step 1 [classical calculations]}: A subset of data $\mathcal{S}_{n} \subset D_N$ is sampled in \textbf{(a)} by doing sampling with replacement (it can also incorporate feature bagging). In \textbf{(b)},  the feature weights $\boldsymbol{w}(\mathcal{S}_n) := \{w_j(\mathcal{S}_n)\}_{j=1}^d$ are calculated exactly in time $\mathcal{O}(Nd)$ for regression and $\mathcal{O}(Nd|\mathcal{M}|)$ for multi-class classification. We set up the initial $k$ centroids, using techniques such as k-means++ \cite{arthur2007k}, and we compute the \textit{weighted centroids} in \textbf{(c)} by doing element-wise multiplication between the centroids and the weights. Whereas step \textbf{(a)} and \textbf{(b)} are performed once, step \textbf{(c)} is performed in each iteration of supervised $q$-means.
    
    \item \textbf{Step 2 [storing data]}: The samples $\mathcal{S}_{n}$ are loaded to the quantum-accessible data structures, the \emph{KP-tree} \cite{kerenidis2016quantum} (blue disk in the figure) taking time $\mathcal{O}(Nd\log^2(Nd))$ and once constructed, the data is queried in quantum superposition in time $\mathcal{O}(\emph{poly} \log(Nd))$. These samples and the feature weights are also stored in a classical data structure on the classical computer (black disks in the figure) as we will use them to do the incremental retraining later. The weighted centroids are also loaded to a \emph{KP-tree} structure, and they are queried in $\mathcal{O}(\emph{poly} \log(kd))$. 
    
    \item \textbf{Step 3 [weighted clustering]}: The samples $\mathcal{S}_{n}$ and the weighted centroids are queried as quantum states in superposition and with them, the supervised quantum $k$-means, also called $q$-means, is performed. The weighted distance between sample $\boldsymbol{x}_i$ and cluster centroids $\boldsymbol{c}_{l}$ is defined by $\|\boldsymbol{x}_i - \boldsymbol{c}_{l}\|_w = \sqrt{\sum_{j=1}^d w_j \cdot (x_{ij} - c_{lj})^2}$. This is estimated by doing \textit{weighted inner product}, $I_w$, between the samples and centroids. This is equivalent to doing the inner product between the \textit{weighted centroids} defined as $\boldsymbol{c}_{lw} := \boldsymbol{w} \circ \boldsymbol{c}_{l} = [w_1c_{l1}, \cdots w_dc_{ld}]^T, \hspace{2mm} l \in [k] $ and the samples: $I_w(\boldsymbol{x}_i, \boldsymbol{c}_{l}) = \sum_{j=1}^d w_i x_{ij}c_{lj} = \boldsymbol{x}_i \cdot \boldsymbol{c}_{lw} = I(\boldsymbol{x}_i, \boldsymbol{c}_{lw})$. The total time complexity is $\mathcal{O}(\emph{poly} \log(Nd) KDk^{3D}d)$, where $k$ is the number of clusters, $K$ is a fixed maximum number of iteration steps and $D$ is the maximum depth of the tree. After convergence, the weighted centroids (or the centroids themselves) are stored in the classical data structure. These are the centroids corresponding to each internal node $\boldsymbol{c}_{\text{node}}$ that will be used for evaluation.  
    In general, clustering is applied to unlabelled data, whereas decision-tree models are supervised methods aiming to minimize the label class impurity in classification or label variance for regression. Simply performing clustering might not result in achieving this reduction. As we will show with the numerical simulations (Section \ref{section:numerics}), with the proposed weighted/supervised clustering, QC-Forest reduces the impurity on par with standard methods. 
    
    \item \textbf{Step 4 [tree grow]}: Clustering is sequentially repeated to grow the tree until reaching the maximum tree depth $D$, taking total time $\mathcal{O}(\emph{poly} \log(Nd) K D k^{3D}d)$.
    
    \item \textbf{Step 5 [label extraction]}: Once the tree is built, the leaf label, $label_{j}$, for each $j$-the leaf node, has to be extracted. This corresponds to the probability of each $i$-th class in classification ($\{P_{i}\}_{i}$) and the mean value in regression. Given a leaf node with the assigned training samples as quantum states, we utilize a proposed novel method to efficiently extract this information, in time $\mathcal{O}(\emph{poly} \log(Nd))$.
\end{itemize}

\begin{figure}[h!]
    \centering
    \includegraphics[width=10cm]{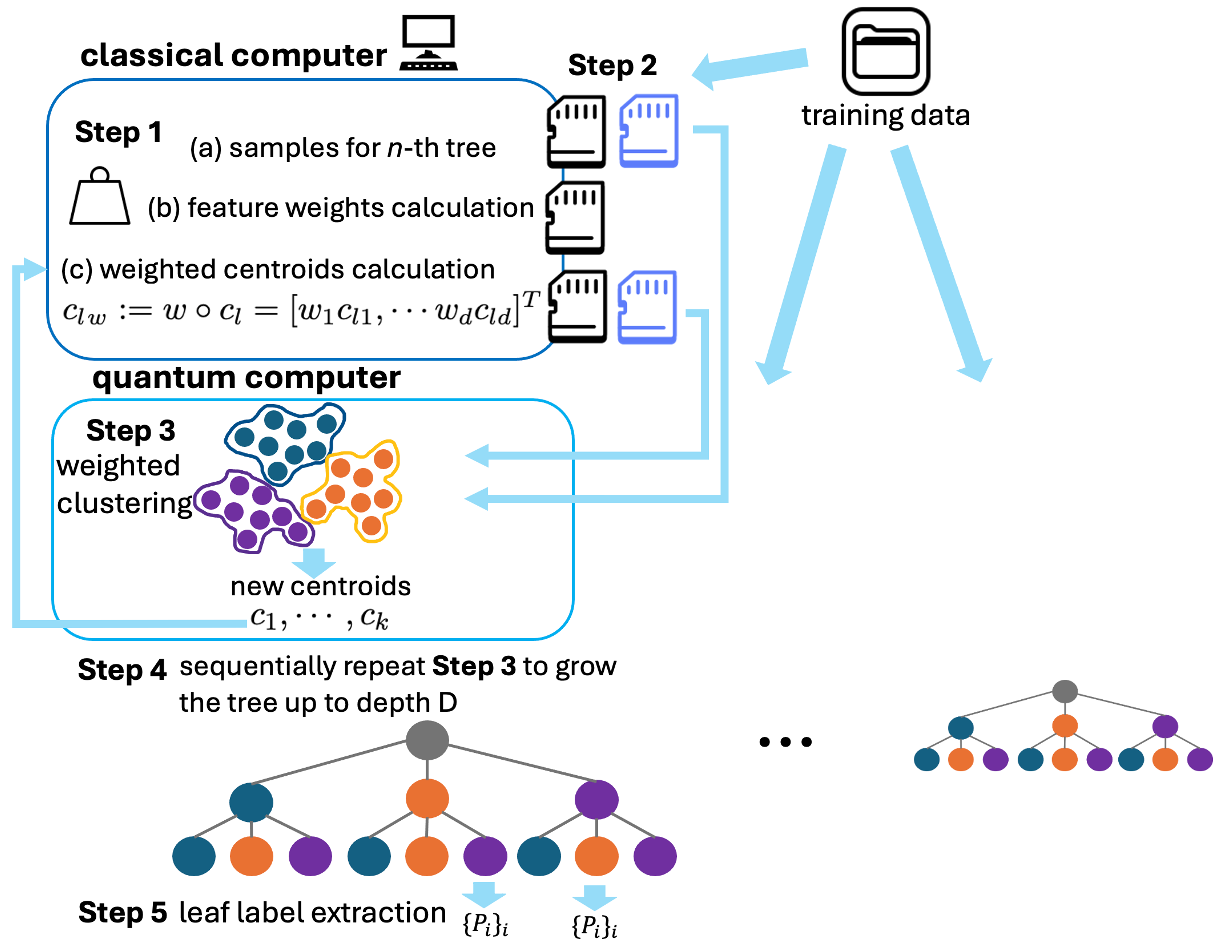}
    \caption{Diagram of QC-Forest-BUILD. For each $n$-th tree, \textbf{Step 1} consists of sampling from the training data to get $\mathcal{S}_{n}$ and calculating both the feature weights and the weighted centroids ($\{\boldsymbol{c}_{lw}\}_{l}$) on the classical computer. In \textbf{Step 2} the data is stored, and both $\mathcal{S}_{n}$ and $\{\boldsymbol{c}_{lw}\}_{l}$ are loaded into a quantum-accessible data structure (blue disk). The samples $\mathcal{S}_{n}$ and its corresponding feature weights are stored in the classical data structure (black disk) as they will be used when doing retraining with this data and a new batch of data. In \textbf{Step 3}, the supervised $q$-means is performed, where in each iteration some new centroids are output and utilized to create the weighted centroids to be used in the next iteration. Once it converges, the weighted centroids (or the centroids themselves) are stored in classical memory. This is repeated until reaching the maximum depth $D$ in \textbf{Step 4}. In \textbf{Step 5}, the leaf label extraction is performed. The probability of each class ($\{P_{i}\}_{i}$) for classification (or the mean value for regression) is estimated.}
    \label{fig:diagram}
\end{figure}

There are two components that are time-consuming in QC-Forest-BUILD, scaling $\mathcal{O}(\emph{poly}(N))$, whereas the rest of the steps scale $\mathcal{O}(\emph{poly} \log(N))$. These components are the loading of $\mathcal{S}_{n}$ to the quantum-accessible data structure (\textbf{Step 2}), and the feature weight calculation (\textbf{Step 1 (b)}). We will show how these bottlenecks are overcome when doing incremental retraining with QC-Forest-RETRAIN in the next section.

\section{QC-Forest-RETRAIN: Incremental Retraining of RF Model \label{QC-Forest-RETRAIN}}

We first define the concept of retraining with incremental batches of data as below, 

\begin{definition} \label{def:incremental_batch} \textbf{Retraining with Incremental Batches of Data}. 
 Assume we have access to periodic new batches of labeled data of sizes $\{N_{\text{new}, t}\}_{t=1}^{\mathcal{T}}$, where $t$ corresponds to time units when the new batch of data of size $N_{\text{new}, t}$  is obtained. Here we assume that each batch size $N_{\text{new}, t} \ll N$. Retraining the model with incremental batches of data at time $\mathcal{T}$ refers to training a given model with the training data of size $N + \sum _{t=1}^{\mathcal{T}} N_{\text{new}, t}$.
\end{definition}

A practical example of incremental retraining with new data batches is detecting fraudulent transactions in credit card payments. Due to the high volume of daily transactions and the frequent manual review of a subset of these transactions, labeled examples are continuously generated. This ongoing stream of new data can be used to regularly retrain the model, enhancing its accuracy and responsiveness to emerging fraud patterns.

Without loss of generality, we consider the case of $t=1$ where we have a model trained with $N$ training examples with QC-Forest-BUILD and our objective is to retrain the model with $N + N_{\text{new}}$ examples. In order to achieve this, we propose the algorithm QC-Forest-RETRAIN as highlighted in Algorithm~\ref{alg:main_algo}.

\begin{algorithm}[H]
   \caption{QC-Forest-RETRAIN}
   \label{alg:main_algo}
\begin{algorithmic}
   \State {\bfseries Input:} Pre-processed data $D_{N_{\text{new}}}=\{(\boldsymbol{x}_i, y_i)\}_{i=1}^{N_{\text{new}}}$ where each $(\boldsymbol{x}_i, y_i) \in \mathbb{R}^{d} \times \mathcal{M}$, with $\mathcal{M} = \{l_1, \dots, l_{|\mathcal{M}|}\}$; Output of Algorithm~\ref{alg:build_algo}; $T$: the number of decision trees in the ensemble; $D$: the maximum depth of the trees, $k$: the number of clusters during each clustering step, and $K$: maximum number of clustering iterations. 
   \State {\bfseries Output :} For each $\text{tree}_n$, $n \in [T]$, the classical tree structure: the centroids $\{\boldsymbol{c}_{\text{node}} \in \mathbb{R}^{d}\}_{\text{node}}$ corresponding to each internal node, and the leaf labels $\{label_{m}\}_{m \in [k^D]}$; feature weight vectors $\{\boldsymbol{w}(\mathcal{S}_{n} + \mathcal{\Tilde{S}}_{n})\}_{n \in [T]}$; $\{\mathcal{S}_{n} \cup \mathcal{\Tilde{S}}_{n}\}_{n \in [T]}$ stored in quantum-accessible data structure.

    \For{$n=1\cdots T$} 
   \State {\bfseries Step 1 (a):} Randomly sample with replacement $\mathcal{\Tilde{S}}_{n} \subset D_{N_{\text{new}}}$ training examples to be used to retrain the the $n$-th tree, $\text{tree}_n$. 
   \State {\bfseries Step 1 (b):} Classically update the feature weight value $\boldsymbol{w}(\mathcal{S}_{n} + \mathcal{\Tilde{S}}_{n}) = \{w_j(\mathcal{S}_{n} + \mathcal{\Tilde{S}}_{n})\}_{j=1}^d$ given $\boldsymbol{w}(\mathcal{S}_{n})$. 
   \State {\bfseries Step 1 (c):} Classically calculate the weighted centroids $\{\boldsymbol{c}_{lw}\}_{l=1}^{k}$ by doing element-wise multiplication between the centroids and the weights. 
   \State {\bfseries Step 2:} Load $\mathcal{\Tilde{S}}_{n}$ and the weighted centroids $\{\boldsymbol{c}_{lw}\}_{l=1}^{k}$ to the quantum-accessible data structure such that the data contents can be retrieved either classically or as quantum superposition. Update the stored feature weights in classical memory by $W(\mathcal{S}_{n} + \mathcal{\Tilde{S}}_{n})$.
   \State {\bfseries Step 3:} Construct the $\text{tree}_n$ by accessing the contents of $\mathcal{S}_{n} \cup \mathcal{\Tilde{S}}_{n}$ and $\{\boldsymbol{c}_{lw}\}_{l=1}^{k}$ as quantum-amplitude encoded states and perform the repeated supervised $q$-means to split each internal node into its children, which are the nodes of the next depth. The weighted centroids (or centroids itself) $\{\boldsymbol{c}_{\text{node}}\}_{\text{node}}$ corresponding to the children nodes are stored in classical memory.
   \State {\bfseries Step 4:} Repeat \textbf{Step 3} to grow the $\text{tree}_n$ up to the maximum depth $D$.
   \State {\bfseries Step 5:} Classically estimate the label information of each leaf nodes of $\text{tree}_n$: $\{label_{m}\}_{m \in [k^D]}$. This corresponds to the probability of each class ($label_{m} = \{P_{i}\}_{i=0}^{|\mathcal{M}|}$) for classification and the mean value for regression. 
   \EndFor
\end{algorithmic}
\end{algorithm}

Once the RF model is online, for each $n$-th tree in the forest, QC-Forest-RETRAIN consists in loading the set of new training examples $\mathcal{\Tilde{S}}_{n}$, sampled from the $N_{\text{new}}$ samples into the \emph{KP-tree} data structure. Since the previous $\mathcal{S}_{n}$ samples have already been loaded when running QC-Forest-BUILD, it is only required to load $\mathcal{\Tilde{S}}_{n}$, taking time $\mathcal{O}(N_{\text{new}}d\log^2(N_{\text{new}}d))$. Under the assumption of $N_{\text{new}} \ll N$, this procedure is effectively independent of $N$. The old and new data contents can be accessed in quantum superposition in time $\mathcal{O}(\emph{poly} \log((N + N_{\text{new}})d)) \approx \mathcal{O}(\emph{poly} \log(Nd))$. 

Another time-intensive step is the feature weight calculation. In this work, we introduce the $\eta$ coefficient that handles multi-classes to be used to calculate the feature weights. Classically computing this coefficient or the Pearson correlation for $\mathcal{S}_{n} + \mathcal{\Tilde{S}}_{n}$ training examples for $\text{tree}_n$ would require time $\mathcal{O}((\mathcal{S}_{n} + \mathcal{\Tilde{S}}_{n})d)\approx \mathcal{O}(Nd)$. We leverage the access to the weights calculated with $\mathcal{S}_n$ samples, $\boldsymbol{w}(\mathcal{S}_n)=\{ w_{j}(\mathcal{S}_n)\}_{j \in [d]}$, that we stored while running QC-Forest-BUILD (\textbf{Step 2}), to update these weights using $\mathcal{\Tilde{S}}_{n}$ examples and obtain $\boldsymbol{w}(\mathcal{S}_{n} + \mathcal{\Tilde{S}}_{n})$. This is efficiently done by leveraging the linearity of these two quantities, taking time  $\mathcal{O}(N_{\text{new}}d)$ and hence becomes significantly time-efficient. We discuss the details of this procedure in Section~\ref{sec:feature_weight}.

The remaining steps of supervised $q$-means followed by tree growth remain the same as the QC-Forest-BUILD with the difference being that each tree in the ensemble, $\text{tree}_n$, is built using $\mathcal{S}_{n} + \mathcal{\Tilde{S}}_{n}$ training examples in contrast to sampling the examples solely from $\mathcal{S}_n$ as it is done in QC-Forest-BUILD procedure. In the theorem below, we demonstrate that QC-Forest-RETRAIN runs in time poly-logarithmically in $N$ for retraining, thus ensuring an efficient retrain procedure for the RF model. 

\begin{theorem}[Time complexity of QC-Forest-RETRAIN]
\label{theo:complexity_retrain}
    Given the new training data $D_{N_{\text{new}}}=\{(\boldsymbol{x}_i, y_i)\}_{i=1}^{N_{\text{new}}}$ where each $(\boldsymbol{x}_i, y_i) \in \mathbb{R}^{d} \times \mathcal{M}$, with $\mathcal{M} = \{l_1, \dots, l_c\}$, such that $N_{\text{new}}\ll N$, and the output of QC-Forest-BUILD (Algorithm~\ref{alg:build_algo}), there is a a hybrid quantum-classical algorithm, QC-Forest-RETRAIN, to retrain the ensemble of $T$ trees of maximum depth $D$ with $N+N_{\text{new}}$ examples in time,
    \begin{equation}
    \begin{split}
        T_{\text{retrain}} &\approx T_{\text{load-new}} +  T_{\text{weights-update}} +   T_{\text{clustering}} + T_{\text{leaf-label}} \\ 
        &\approx T \cdot \mathcal{O}( poly \log(Nd))
    \end{split}
    \end{equation}
       where, $T_{\text{load-new}}$ is the time to load the new samples in the KP-tree, $ T_{\text{weights-update}}$ the time to update the feature weights classically, $T_{\text{clustering}}$ is the time to perform repeated supervised $q$-means and grow the tree up to maximum depth $D$, and $T_{\text{leaf-label}}$ is the extraction of output label information from leaf nodes.
\end{theorem}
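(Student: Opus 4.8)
My plan is to prove the statement by decomposing the total runtime into exactly the four additive contributions already displayed, $T_{\text{retrain}} \approx T\cdot(T_{\text{load-new}} + T_{\text{weights-update}} + T_{\text{clustering}} + T_{\text{leaf-label}})$, and then bounding each piece separately by $\mathcal{O}(\emph{poly}\log(Nd))$ per tree, so that the sum over the $T$ independently-built trees in the ensemble gives $T\cdot\mathcal{O}(\emph{poly}\log(Nd))$. Throughout I would treat the hyperparameters $T$, $D$, $k$, $K$, the feature dimension $d$, and $|\mathcal{M}|$ as fixed quantities that do not scale with $N$, and I would use the hypothesis $N_{\text{new}}\ll N$ to absorb every $N_{\text{new}}$-dependent factor into $\mathcal{O}(\emph{poly}\log(Nd))$; a sum of poly-logarithms being again poly-logarithmic then closes the argument.

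For $T_{\text{load-new}}$ I would argue that, since the samples $\mathcal{S}_n$ were already inserted into the KP-tree during \algo-BUILD (\textbf{Step 2} of Algorithm~\ref{alg:build_algo}), retraining only needs to insert the $|\mathcal{\Tilde{S}}_n|\le N_{\text{new}}$ new samples and the $k$ weighted centroids. By the KP-tree insertion cost this is $\mathcal{O}(N_{\text{new}}d\log^2(N_{\text{new}}d)) + \mathcal{O}(kd\log^2(kd))$, which is independent of $N$ and hence $\mathcal{O}(\emph{poly}\log(Nd))$ under $N_{\text{new}}\ll N$; afterwards the combined set $\mathcal{S}_n\cup\mathcal{\Tilde{S}}_n$ of $N+N_{\text{new}}$ points is queryable as an amplitude-encoded superposition in time $\mathcal{O}(\emph{poly}\log((N+N_{\text{new}})d)) = \mathcal{O}(\emph{poly}\log(Nd))$, which is what the downstream quantum subroutines need.

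For $T_{\text{weights-update}}$ the crucial point I would establish is that the feature-relevance coefficient — the $\eta$ coefficient in the multi-class case, the Pearson correlation in the regression/binary case — admits an exact expression in terms of additive sufficient statistics over the training set (per-feature sums and sums of squares, per-class counts, per-class feature sums). Since those statistics, together with $\boldsymbol{w}(\mathcal{S}_n)$, were cached during \algo-BUILD, one refreshes each statistic with a single pass over the $N_{\text{new}}$ new samples and then recomputes $\boldsymbol{w}(\mathcal{S}_n+\mathcal{\Tilde{S}}_n)$ with $\mathcal{O}(d)$ (resp.\ $\mathcal{O}(d|\mathcal{M}|)$) extra work, for a total of $\mathcal{O}(N_{\text{new}}d)$ (resp.\ $\mathcal{O}(N_{\text{new}}d|\mathcal{M}|)$), independent of $N$; this is the content to be pinned down rigorously in Section~\ref{sec:feature_weight}. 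The per-iteration element-wise product forming the weighted centroids adds only $\mathcal{O}(kd)$.

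For $T_{\text{clustering}}$ I would invoke the complexity of the supervised $q$-means primitive inherited from Des-q and the quantum $k$-means of Kerenidis \etal: each node split runs at most $K$ iterations, each iteration on $N+N_{\text{new}}$ amplitude-encoded points costing $\mathcal{O}(\emph{poly}\log((N+N_{\text{new}})d))$ up to factors in $k$, $d$ and the estimation precision, and growing to depth $D$ splits $\mathcal{O}(k^D)$ nodes with the $k^{3D}$ overhead already recorded for \algo-BUILD, giving $\mathcal{O}(\emph{poly}\log(Nd)\,KDk^{3D}d) = \mathcal{O}(\emph{poly}\log(Nd))$. For $T_{\text{leaf-label}}$ I would use the novel leaf-extraction method of Section~\ref{algorithmic_ingredients}, which estimates the class probabilities $\{P_i\}_i$ (or the label mean) at a given leaf in $\mathcal{O}(\emph{poly}\log(Nd))$ from the samples routed there, so over $k^D$ leaves the cost is again $\mathcal{O}(\emph{poly}\log(Nd))$. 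Summing the four bounds and multiplying by $T$ then yields the claim. I expect the main obstacle to be the weights-update step: confirming that the $\eta$ coefficient — which is new here and has no analogue in Des-q — genuinely decomposes into additive statistics refreshable with $\mathcal{O}(N_{\text{new}}d)$ work rather than recomputed from scratch, and that the quantities cached by \algo-BUILD are exactly those required; a secondary subtlety is that the poly-log scaling of the $q$-means step presupposes that the updated KP-tree still answers superposition queries in poly-logarithmic time, which is where the "no linear-in-$N$ cost after build" guarantee is actually cashed in.
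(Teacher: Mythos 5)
Your proposal is correct and follows essentially the same route as the paper: the same four-way decomposition into loading, weight update, clustering, and leaf-label extraction, with the loading and weight-update costs bounded by $\mathcal{O}(N_{\text{new}}d\log^2(N_{\text{new}}d))$ and $\mathcal{O}(N_{\text{new}}d|\mathcal{M}|)$ via cached additive sufficient statistics (the paper's Theorems \ref{thm:pearson_correlation} and \ref{thm:eta_coef}), and the clustering and leaf-label costs inherited from the Des-q supervised $q$-means and the new multi-class leaf-extraction subroutine, each $\mathcal{O}(\emph{poly}\log(Nd))$ up to the $KDk^{3D}d$-type constants. The obstacle you flag about the $\eta$ coefficient is exactly what the paper resolves in Appendix~\ref{appendix:eta_coeff}, so nothing is missing.
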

\begin{proof}
We analyze the complexity of each component: \\
1. \emph{Loading new examples ($T_{\text{load-new}}$)}: As highlighted in Appendix~\ref{appendix:data_loading}, the time to load the new examples depends on the size of the new training data, i.e.,
$T_{\text{load-new}} = \mathcal{O}(N_{\text{new}}d\log^2(N_{\text{new}}d))$ using the \emph{KP-tree} structure. Given $N_{\text{new}}\ll N$, this is independent of $N$. 

\noindent 2. \emph{Classical update the feature weights ($T_{\text{weights-update}}$)}: $T_{\text{weights-update}} =  \mathcal{O}(N_{\text{new}}d|\mathcal{M}|)$ when using $\eta$ coefficient or $ \mathcal{O}(N_{\text{new}}d)$ when using Pearson correlation. This is the time taken to update the features weights and load them in the KP-tree. Refer to Theorem \ref{thm:pearson_correlation} and \ref{thm:eta_coef} for the proofs.  
  
3. \emph{Quantum clustering and tree grow ($T_{\text{clustering}}$)}: As highlighted in Section \ref{sec:weighted_clustering} using the results of Des-q algorithm by Kumar \etal \cite{kumar2023desq},
%Section~\ref{sec:weighted_cluster}, 
    \begin{equation}
       T_{\text{clustering}} \approx  \mathcal{O}(\emph{poly} \log(Nd) C_1)
    \end{equation}
    where $C_1 = \frac{K D k^{3D}d \log^2 (k)\log^2(p)\log^2(1/\Delta)\eta_1^2\eta_2}{\epsilon_1^2\epsilon_2}$, $K$ is the maximum iterations in the supervised $q$-means clustering, $\epsilon_1$ is, by a multiplicative factor, the error in the estimation of the inner product between the training examples and the centroids, estimated with probability $1-2\Delta$ as part of the clustering, $\eta_1 = \text{max}_{i \in [N]}(\|\boldsymbol{x}_i\|^2)$, $\epsilon_2$ is, by a multiplicative factor, the estimation error of the centroids of the tree's internal nodes, part of the centroid assignment during clustering, $\eta_2 = \text{max}_{j \in [d]}\|x_{j}\|$ and $p$ is the number of qubits to encode the distance between the weighted centroids and training examples. 
    
4. \emph{Output label information ($T_{\text{leaf-label}}$)}: For the task of regression, we show in Section~\ref{sec:leaf_assign} that extracting the label value for each leaf node within a precision $\epsilon_3$ takes time,
\begin{equation}
       T_{\text{leaf-label}} \approx  \mathcal{O}(T \emph{poly} \log(Nd) C_2)
    \end{equation}
    where $C_2 = \frac{ Dk^{3D}d \log (k)\log(p)\log(1/\Delta)\eta_1\eta_3}{\epsilon_1 \epsilon_3}$ and $\eta_3 = \|Y\|$. 
    
For classification, the relative occurrence of each class is computed: ${p_{jl}} = \frac{N_{jl}}{\sum_{l \in [\mathcal{M}]}N_{jl}}$, where $j\in[k^D],l\in[\mathcal{M}]$ with $N_{jl}$ being the number of examples in the $j$-th cluster with label $l$. As proved in Section~\ref{sec:leaf_assign}, the classical estimation of $p_{jl}$ for the leaf nodes and each label $l$ takes time,
\begin{equation}
       T_{\text{leaf-label}} \approx  \mathcal{O}(T \emph{poly} \log(Nd) C_3)
    \end{equation}
    where $C_3 = C_2 \cdot  \mathcal{O}(|\mathcal{M}|\log(|\mathcal{C}_{j,D}|\log (|\mathcal{M}|))\log(|\mathcal{M}|))$ and $|\mathcal{C}_{j,D}|$ corresponds to the training samples assigned to the $j$-th cluster/node at depth $D$, $|\mathcal{C}_{j,D}| \approx \mathcal{O}(N)$. 
\end{proof}

It becomes evident that the proposed algorithm can become memory-intensive, as it requires storing training samples in the data structure that allows retrieval either classically or as a quantum superposition during incremental retraining with new data batches. This means the stored data size will continually increase. To manage this, old data could be periodically removed based on criteria such as its age, or replaced with reduced data that retains the most significant information, such as through principal component analysis.

While the total samples used for training are incrementally stored as $N + \sum _{t=1}^{\mathcal{T}} N_{\text{new}, t}$ over different $\mathcal{T}$, increasing the memory complexity, the feature weights maintain fixed memory complexity as they are updated. Additionally, the weighted centroids are continually overwritten in both classical and quantum-accessible memory during each tree's retraining and at each clustering iteration.

Now that we have introduced both the construction and the retraining with QC-Forest, we can comment on how the inference with the RF models created by the proposed algorithm takes place. Both versions of the proposed algorithm, QC-Forest-BUILD, and QC-Forest-RETRAIN, output a classical representation of the trees in the ensemble. The structure of each tree consists of the set of the centroids of each internal node $\{\boldsymbol{c}_{\text{node}} \in \mathbb{R}^{d}\}_{\text{node}}$, and the leaf label information for each leaf node. With this structure, the inference is executed on the classical computer. Given a sample and the set of trees from the ensemble, the inference consists of the sample traversing through each tree individually until it reaches the leaf node. The sample goes through the tree by calculating the distance between the sample and the centroids of the children of the assigned node in the previous depth. The definition of distance used is the weighted Euclidean distance defined in the training stage using the feature weights. During training (or retraining) of the RF, we can store either the centroids or the weighted centroids (Step 1 (b)). With the weighted centroids, it can be used to perform the weighted distance calculation. Moreover, as the feature weights are stored in the classical memory, these could be used to define the weighted distance too. 

This procedure for all the trees is repeated either in sequential order or in parallel. Therefore, the time complexity to perform the inference for one sample utilizing one tree of the RF model is $T_{\text{eval}} = \mathcal{O}(kDd)$, where $k$ is the number of clusters, $D$, the maximum tree depth, and $d$, the number of features. 

\section{Algorithmic Contributions \label{algorithmic_ingredients}}

Now that we have introduced the proposed algorithm and discussed its complexity, in this section, we present the key novel algorithmic ingredients introduced in this work, part of QC-Forest-BUILD and QC-Forest-RETRAIN. 

\subsection{Feature Weight Calculation} \label{sec:feature_weight}

Feature weights quantify the statistical relationship between input feature vectors and the target variable. In this work, we consider two feature weight methods: Pearson correlation for regression and binary classification \cite{pearson1895vii}, and $\eta$ coefficient \cite{levine2002eta} for multi-class classification. \\

\noindent \textbf{1. Pearson Correlation}. Given a data set with continuous features with continuous labels (regression task), the Pearson correlation coefficient provides a method to estimate the bivariate linear relationship between each feature vector in the data and the target label. This same measure can be used to quantify the linear relationship between continuous features and a binary target label vector. In this setting, it takes the name of point-biserial. 

For constructing a given $\text{tree}_n$ in the ensemble, we sample a constant fraction subset $\mathcal{S}_n \subset D_N$. Here for simplicity and without the lack of generality, we show how to compute the feature weights for the entire dataset instead of the subset. Given the dataset $D_N$, let us refer to the $j$-th column vector as the feature vector $\boldsymbol{x}^j = [x_{1j},\cdots,x_{Nj}]^T \in \mathbb{R}^N$. Similarly, we denote the target label vector as $Y = [y_1,\cdots,y_N]^T$. The Pearson correlation coefficient between the two vectors is defined as,
\begin{equation}
    w_{j} =  \frac{\sum_{i=1}^N(x_{ij} - \mu_j)(y_i - \mu_y)}{\sqrt{\sum_{i=1}^N(x_{ij} - \mu_j)^2}\sqrt{\sum_{i=1}^N(y_{i} - \mu_y)^2}}
    \label{Eq:pearson}
\end{equation}
where $\mu_j = \frac{1}{N} \sum_{i=1}^N x_{ij}$ is the mean over the feature vector and $\mu_y = \frac{1}{N} \sum_{i=1}^N y_{i}$ is the mean over the label vector. 

We compute $\boldsymbol{w}(N) = \{w_j\}_{j\in [d]}$ and computing a single $w_j$ requires the calculation of the inner product between the two vectors to obtain the quantity $\sum_i x_{ij}y_i$, the mean of each vector $\mu_j, \mu_y$, and the inner product of the vector with itself $\sum_i x_{ij}^2, \sum_i y_i^2$. All these operations take time linear in $N$. Thus, computing the feature weights in QC-Forest-BUILD takes time $\mathcal{O}(Nd)$.

Next, the following theorem allows us to efficiently update the Pearson correlations to obtain their values corresponding to data of size $N + N_{\text{new}}$. We leverage the fact that we have calculated these correlations for $N$ samples. 

\begin{theorem} [Time complexity to classicalyl update Pearson correlation] \label{thm:pearson_correlation}
Given access to $\mu_j$, $\mu_y$ and $\sum_{i=1}^{N} x_{ij}y_{i}$ $\forall$ $j \in [d]$, calculated as part of the calculation of $\boldsymbol{w}(N) := \{w_{1}(N), \dots, w_{d}(N)\}$ for $D_{N}=\{(\boldsymbol{x_n}, y_n)\}_{n=1}^{N} \in \mathbb{R}^{d}$ x $\{l_1, \dots, l_c\}$ and access to the new batch of data with $N_{\text{new}}$ examples $D_{N_{\text{new}}}$, $\boldsymbol{w}(N+N_{\text{new}})$ is computed in time  $\mathcal{O}(N_{\text{new}} d)$.
\end{theorem}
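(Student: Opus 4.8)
The plan is to exploit the fact that each of the three sums appearing in the Pearson correlation formula \eqref{Eq:pearson} — the cross term $\sum_i x_{ij} y_i$, the feature self-term $\sum_i x_{ij}^2$, and the label self-term $\sum_i y_i^2$ — is \emph{additive} over the data points, so that its value on $D_N \cup D_{N_{\text{new}}}$ is obtained from its value on $D_N$ plus a correction computed solely from the $N_{\text{new}}$ new points. First I would rewrite $w_j$ in a "moment form" that uses only the raw sums $S_j^{(1)} := \sum_i x_{ij}$, $S_j^{(2)} := \sum_i x_{ij}^2$, $S_j^{(xy)} := \sum_i x_{ij} y_i$, together with $S^{(1)}_y := \sum_i y_i$ and $S^{(2)}_y := \sum_i y_i^2$, via the standard identities $\sum_i (x_{ij}-\mu_j)(y_i-\mu_y) = S_j^{(xy)} - \tfrac{1}{N} S_j^{(1)} S_y^{(1)}$ and $\sum_i (x_{ij}-\mu_j)^2 = S_j^{(2)} - \tfrac{1}{N}(S_j^{(1)})^2$ (and similarly for $Y$). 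Note that the hypothesis of the theorem gives us $\mu_j$, $\mu_y$, and $S_j^{(xy)}$ for the $N$-sample dataset; from $\mu_j$ and $\mu_y$ we recover $S_j^{(1)} = N\mu_j$ and $S_y^{(1)} = N\mu_y$ in $\mathcal{O}(d)$ time, and I would note that in practice $S_j^{(2)}$ and $S_y^{(2)}$ are also produced as byproducts of the original $\boldsymbol{w}(N)$ computation (being exactly the denominators' ingredients), so they are available at no asymptotic cost; if one prefers not to assume this, they can alternatively be recomputed, but that would cost $\mathcal{O}(Nd)$, so the clean statement keeps them cached.

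The second step is the update itself: for the combined dataset of size $N + N_{\text{new}}$, set
\begin{equation}
\widetilde S_j^{(1)} = S_j^{(1)} + \sum_{i=1}^{N_{\text{new}}} x_{ij}, \quad
\widetilde S_j^{(2)} = S_j^{(2)} + \sum_{i=1}^{N_{\text{new}}} x_{ij}^2, \quad
\widetilde S_j^{(xy)} = S_j^{(xy)} + \sum_{i=1}^{N_{\text{new}}} x_{ij} y_i,
\end{equation}
and analogously $\widetilde S_y^{(1)}, \widetilde S_y^{(2)}$. Each of the $N_{\text{new}}$-sums over a single feature column costs $\mathcal{O}(N_{\text{new}})$, and there are $d$ columns plus the two label sums, so forming all updated moments costs $\mathcal{O}(N_{\text{new}} d)$. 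Then $w_j(N+N_{\text{new}})$ is assembled from $\widetilde S_j^{(1)}, \widetilde S_j^{(2)}, \widetilde S_j^{(xy)}, \widetilde S_y^{(1)}, \widetilde S_y^{(2)}$ and the new sample count $N+N_{\text{new}}$ by the moment-form identity above, in $\mathcal{O}(1)$ arithmetic operations per feature, hence $\mathcal{O}(d)$ total. Summing the contributions gives $\mathcal{O}(N_{\text{new}} d) + \mathcal{O}(d) = \mathcal{O}(N_{\text{new}} d)$, which is independent of $N$ as claimed. I would close by remarking that the same cached moments $\widetilde S_j^{(1)}, \widetilde S_j^{(2)}, \widetilde S_j^{(xy)}, \widetilde S_y^{(1)}, \widetilde S_y^{(2)}$ become the inputs for the \emph{next} retraining round, so the scheme chains across the $\mathcal{T}$ batches of Definition~\ref{def:incremental_batch} with per-round cost $\mathcal{O}(N_{\text{new},t} d)$.

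There is no deep obstacle here — the result is essentially a bookkeeping statement about sufficient statistics — so the "hard part" is really just presentational: being careful that the quantities assumed available in the theorem hypothesis ($\mu_j$, $\mu_y$, $\sum_i x_{ij} y_i$) genuinely suffice, which forces the mild observation that $\mu_j, \mu_y$ encode $S_j^{(1)}, S_y^{(1)}$ and that the self-inner-products $\sum_i x_{ij}^2, \sum_i y_i^2$ must also be carried along (they are computed anyway inside $\boldsymbol{w}(N)$). One should also state explicitly that sampling $\widetilde{\mathcal S}_n$ is a constant fraction of $D_{N_{\text{new}}}$ so the per-tree update is still $\mathcal{O}(N_{\text{new}} d)$, matching the phrasing used for the full dataset in the theorem. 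The numerical stability of the "sum of squares minus square of sums" form is a genuine practical concern but orthogonal to the complexity claim; if desired one can instead maintain centered moments via Welford-type incremental updates at the same asymptotic cost, and I would mention this in a remark rather than in the proof proper.
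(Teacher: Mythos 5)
Your proposal is correct and follows essentially the same route as the paper's proof: both exploit the additivity of the underlying sums (cross-products, means, and sums of squares) over data points so that only $\mathcal{O}(N_{\text{new}})$ work per feature is needed, the paper phrasing the update in terms of the combined mean $\mu_j^{\text{tot}}$ and the sums of squared deviations $SS_T^{j}, SS_y$ rather than your raw moments. Your observation that the self-inner-products $\sum_i x_{ij}^2$ and $\sum_i y_i^2$ must also be carried along is well taken --- the paper's own proof likewise relies on stored values of $SS_T^{j}$ and $SS_y$ that are not listed explicitly in the theorem's hypothesis.
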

\begin{proof}

Note that the numerator and denominator can be calculated as, 
\begin{equation}
    Num = \sum_{i=1}^{N} x_{ij}y_{i}  - 2 \mu_{j} \mu_{y}N + N \mu_{j}^{2}
\end{equation}
\begin{equation}
    Den = \sqrt{SS_{T}^{j}} \sqrt{SS_{y}}
\end{equation}
Where $SS_{T}^{j} =  \sum_{i=1}^{N} (x_{ij} - \mu_j)^2$ is the sum of the squared deviations of the feature vector and $SS_{y} =  \sum_{i=1}^{N} (y_{i} - \mu_y)^2$, the sum of the squared deviations of the labels. 

Given a new batch of data $N_{\text{new}}$, we do the following calculations,

\begin{enumerate}
    \item Given the stored $\mu_j$ (previously computed with $N$ examples), the mean for $N + N_{\text{new}}$ examples, named $\mu_j^{\text{tot}}$, can be computed as: $\mu_j^{\text{tot}} = \frac{N\mu_j + N_{\text{new}}\mu'_j}{N + N_{\text{new}}}$, where $\mu'_j$ is the mean of $N_{\text{new}}$ examples. It only requires the calculation of $\mu'_j$ and simple arithmetic calculations. This operation takes time $\mathcal{O}(N_{\text{new}}d)$ for all the features. 
    \item Similarly, we update the mean over the labels $\mu_y^{\text{tot}}$ in time $\mathcal{O}(N_{\text{new}}d)$
    \item In order to update the numerator note that, 
    \begin{equation}
        Num^{\text{tot}} = \sum_{i=1}^{N} x_{ij}y_{i} +\sum_{i=N+1}^{N_{\text{new}}} x_{ij}y_{i} -2 \mu_{j}^{\text{tot}} \mu_{y}^{\text{tot}}(N+N_{\text{new}}) + (N+N_{\text{new}}) (\mu_{j}^{\text{tot}})^{2}
    \end{equation}
    The only term needed to be calculated is the second term and that takes time, for all the features, $\mathcal{O}(N_{\text{new}}d)$\\ 
    \item The update of the denominator requires to update $SS_{T}^{j}$ and $SS_{y}^{j}$ as we see below, 

    \begin{equation}
        den^{\text{tot}} = \sqrt{SS_{T}^{(j,tot)}} \sqrt{SS_{y}^{(j,tot)}}
    \end{equation}
    
    $SS_{T}^{j}$ is updated in the same way as Step 4 in Theorem \ref{thm:eta_coef} in time $\mathcal{O}(N_{\text{new}})$. $SS_{y}^{j}$ is updated similarly, also taking $\mathcal{O}(N_{\text{new}})$ time to calculate $\sum_{i=N+1}^{N_{\text{new}}} y_{i}^2$. For all the features, the update of the denominators takes time $\mathcal{O}(N_{\text{new}}d)$. \\    
\end{enumerate}
The total time for the update of the Pearson correlations for all the features is $\mathcal{O}(N_{\text{new}}d)$ and this concludes the proof. 
\end{proof}

A key point to note is that the time complexity of updating the feature weights only has a dependency on $N_{\text{new}}$, instead of $N$. Given $N_{\text{new}} \ll N$, this procedure is highly efficient. \\

%\subsubsection{Correlation Ratio or $\eta$ Coefficient}
%ref https://discoveringstatistics.com/repository/eta_ebs.pdf

\noindent \textbf{2. Eta coefficient, $\eta$}. This coefficient, also known as the correlation ratio, is a measure of the curvilinear relationship between the statistical dispersion within individual categories and the dispersion across the whole population or sample. It measures the non-linear association between the feature vector and the label vector, i.e., the variance percentage in the label vector that can be explained by the feature vector. A key benefit of this measure is that it can handle multi-class classification data with categorical features.

Given the dataset $D_{N}$, the $\eta$ coefficient for the $j$-th feature is defined as below, 
\begin{equation}
    w_j = \eta_{j}^{2} = \frac{SS_{c}^{j}}{SS_{T}^{j}}
\end{equation}

where $SS_{T}^{j} =  \sum_{i=1}^{N} (x_{ij} - \mu_j)^2$ is the sum of the squared deviations for the vector
of the $j$-th feature, $SS_{c}^{j} = \sum_{l=l_1}^{l_c} N_{jl}  (\mu_{jl} - \mu_j)^2$ is the sum of the squared deviations for the vector of the $j$-th feature among the categories (or classes), $\mu_j$ is the mean of the $j$-th feature vector, and $\mu_{jl}$ is the mean of the $j$-th feature vector among the $N_{jl}$ examples that correspond to the label $l$ and $SS_{c}^{j}$. The complexity of computing $\boldsymbol{w}(N) = \{w_j\}_{j \in [d]}$ from scratch is $\mathcal{O}(Nd|\mathcal{M}|)$.

Similarly to what we do with Pearson correlation, we provide a theorem to efficiently update the $\eta$ coefficient for data of size $N + N_{\text{new}}$ given access to the precomputed values corresponding to the $N$ initial samples.
\begin{theorem} [Time complexity to classically update $\eta$ coefficient]\label{thm:eta_coef} 
Given access to $\mu_{j}$,  $\mu_{jl}$, $N_{jl}$, $SS_c^{(j)}$ and $SS_T^{(j)}$ $\forall$ $j \in [d], l \in [\mathcal{M}]$, calculated as a part $\boldsymbol{w}(N)$ for initial dataset $D_{N}=\{(\boldsymbol{x}_i, y_i)\}_{i=1}^{N}$ and access to the new batch of data with $N_{\text{new}}$ examples $D_{N_{\text{new}}}$, $\boldsymbol{w}(N+N_{\text{new}})$ can be computed in time  $\mathcal{O}(N_{\text{new}} d |\mathcal{M}|)$.
\end{theorem}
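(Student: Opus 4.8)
The plan is to follow the same template as the proof of Theorem~\ref{thm:pearson_correlation}: decompose $w_j(N+N_{\text{new}}) = SS_c^{(j,\text{tot})}/SS_T^{(j,\text{tot})}$ into pieces, each of which is either an $\mathcal{O}(1)$ arithmetic combination of the stored quantities with cheaply computed batch statistics, or a quantity that requires touching the new data exactly once. \textbf{Step 1:} make a pass over $D_{N_{\text{new}}}$ and, for every feature $j$ and label $l$, accumulate the batch count $N'_{jl}$, the batch feature mean $\mu'_j$, the batch per-category feature mean $\mu'_{jl}$, and the batch deviation sum $SS_T^{\prime(j)} := \sum_{\boldsymbol{x}_i \in D_{N_{\text{new}}}}(x_{ij}-\mu'_j)^2$ (obtained from $\sum_i x_{ij}^2$ and $\sum_i x_{ij}$, or in a second sweep). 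This pass costs $\mathcal{O}(N_{\text{new}} d |\mathcal{M}|)$, and it dominates the whole procedure.

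\textbf{Step 2:} the overall mean, the per-category means, and the per-category counts update exactly by count-weighted averaging, $N_{jl}^{\text{tot}} = N_{jl}+N'_{jl}$, $\mu_j^{\text{tot}} = (N\mu_j + N_{\text{new}}\mu'_j)/(N+N_{\text{new}})$, and $\mu_{jl}^{\text{tot}} = (N_{jl}\mu_{jl}+N'_{jl}\mu'_{jl})/(N_{jl}+N'_{jl})$, at a cost of $\mathcal{O}(d|\mathcal{M}|)$. \textbf{Step 3 (update $SS_c$):} since only $|\mathcal{M}|$ categories are involved and $N_{jl}^{\text{tot}}$, $\mu_{jl}^{\text{tot}}$, $\mu_j^{\text{tot}}$ are now all available, I would re-evaluate $SS_c^{(j,\text{tot})} = \sum_l N_{jl}^{\text{tot}}(\mu_{jl}^{\text{tot}} - \mu_j^{\text{tot}})^2$ directly from the definition in $\mathcal{O}(d|\mathcal{M}|)$ time (one could instead update it incrementally from the stored $SS_c^{(j)}$, but this is not necessary).

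\textbf{Step 4 (update $SS_T$)} is the one genuinely non-routine ingredient, and it is the step invoked by name in the proof of Theorem~\ref{thm:pearson_correlation}. Here one must \emph{not} simply add $SS_T^{\prime(j)}$ to the stored $SS_T^{(j)}$, because each is a sum of squared deviations about its own, now shifted, mean. Instead I would use the pooled sum-of-squares identity for the two disjoint groups $A$ (the old $N$ points, mean $\mu_j$) and $B$ (the new $N_{\text{new}}$ points, mean $\mu'_j$):
\begin{equation}
SS_T^{(j,\text{tot})} = SS_T^{(j)} + SS_T^{\prime(j)} + \frac{N\,N_{\text{new}}}{N+N_{\text{new}}}(\mu_j - \mu'_j)^2 ,
\end{equation}
which I would justify in a line by expanding $(x_{ij}-\mu_j^{\text{tot}})^2$ about $\mu_j$ on $A$ (the linear term vanishes since $\sum_{i\in A}(x_{ij}-\mu_j)=0$) and about $\mu'_j$ on $B$, then substituting $\mu_j - \mu_j^{\text{tot}} = \frac{N_{\text{new}}}{N+N_{\text{new}}}(\mu_j-\mu'_j)$ and $\mu'_j - \mu_j^{\text{tot}} = \frac{N}{N+N_{\text{new}}}(\mu'_j-\mu_j)$. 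Given the stored $SS_T^{(j)}$, the batch quantity $SS_T^{\prime(j)}$ from Step~1, and the means, each $SS_T^{(j,\text{tot})}$ is $\mathcal{O}(1)$, so $\mathcal{O}(d)$ overall.

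\textbf{Step 5:} set $w_j(N+N_{\text{new}}) = SS_c^{(j,\text{tot})}/SS_T^{(j,\text{tot})}$ for all $j$, in $\mathcal{O}(d)$. Adding the costs of Steps~1--5, the pass over the new batch dominates and the total is $\mathcal{O}(N_{\text{new}} d |\mathcal{M}|)$, with no term depending on $N$, which is the claim. The main obstacle, as noted, is precisely getting Step~4 right---recognizing the non-additivity of the deviation sums and supplying the correct cross-term $\frac{N N_{\text{new}}}{N+N_{\text{new}}}(\mu_j-\mu'_j)^2$---together with verifying that the theorem's hypothesis indeed makes every stored ingredient ($\mu_j$, $\mu_{jl}$, $N_{jl}$, $SS_c^{(j)}$, $SS_T^{(j)}$) available, so that no quantity scaling with $N$ has to be recomputed.
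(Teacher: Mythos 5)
Your proposal is correct and follows essentially the same route as the paper's proof in Appendix~\ref{appendix:eta_coeff}: update $\mu_j$, $\mu_{jl}$, and $N_{jl}$ by count-weighted averaging in one pass over the new batch, recompute $SS_c^{(j,\text{tot})}$ directly from its $|\mathcal{M}|$-term definition, and update $SS_T^{(j)}$ with an $\mathcal{O}(N_{\text{new}})$ per-feature computation, for a total of $\mathcal{O}(N_{\text{new}} d |\mathcal{M}|)$. The only cosmetic difference is that the paper updates $SS_T$ via the raw second moment $\sum_i x_{ij}^2$ rather than your pooled-deviation identity with the cross term $\frac{N N_{\text{new}}}{N+N_{\text{new}}}(\mu_j-\mu'_j)^2$; the two are algebraically equivalent and equally cheap.
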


\begin{proof}
Similarly, as we do with Pearson correlation, we leverage the linearity of the magnitudes that are computed during the calculation of the $\eta$ coefficient. We provide the formal proof of the above theorem in Appendix~\ref{appendix:eta_coeff}.
\end{proof}

In the following Table \ref{table:feature_weights}, we summarize the time complexity and memory requirement to update the feature weight with the two proposed methods for each tree in the ensemble.
\begin{table}[h!]
\caption{Summary of the time and memory complexity of the used methods for feature weight calculation}
\label{table:feature_weights}
%\vskip 0.15in
\begin{center}
\begin{small}
\begin{sc}
\begin{tabular}{lcccr}
\toprule
Method & Time & Memory \\
\midrule
$\eta$ coefficient    & $\mathcal{O}(N_{\textit{new}} d |\mathcal{M}|)$   & $\mathcal{O}(d |\mathcal{M}|)$  \\
Pearson    & $\mathcal{O}(N_{\textit{new}} d)$   & $\mathcal{O}(d)$  \\
%RELIEF                & $\mathcal{O}(N_{\text{new}}^2 d c)$ &  $d$-length vector \\
\bottomrule
\end{tabular}
\end{sc}
\end{small}
\end{center}
\vskip -0.1in
\end{table}

\subsection{Supervised $q$-means and tree growth} \label{sec:weighted_clustering}

The supervised $q$-means procedure introduced by \cite{kumar2023desq} is based on the original $q$-means clustering \cite{kerenidis2019q} and it incorporates the weighted distance. In this work, we leverage this classical-quantum clustering using the classically-computed feature weights $\boldsymbol{w}(N)$ or $\boldsymbol{w}(N + N_{\text{new}})$, depending on whether we are constructing or retraining the ensemble of trees. We start at the root and upon sequential repetition of the clustering procedure grow the tree up to reach a maximum given depth $D$ with $k^D$ leaf nodes. At high level, given some initial cluster centroids $\boldsymbol{c}_1, \cdots, \boldsymbol{c}_k$, the steps are the following and they are repeated $K$ times or until convergence:

\noindent 1. Classically calculate the \textit{weighted centroids} by performing the element-wise multiplication between the feature weights and the centroids. \\
\noindent 2. Quantum subroutines are initially employed to assign each training example $\boldsymbol{x}_i$ to their nearest weighted centroids using Euclidean distance estimation, which is a practical way of implementing the weighted distance calculation with the centroids $\boldsymbol{c}_{l}$, which is defined as: $\|\boldsymbol{x}_i - \boldsymbol{c}_{l}\|_w = \sqrt{\sum_{j=1}^d w_j \cdot (x_{ij} - c_{lj})^2}$. \\
\noindent 3. Quantum centroid update subroutine calculates the new centroid values. %The primary subroutines used to build our this step are,\\
%\noindent 1. \emph{Weighted centroid distance estimation.} \\
%\noindent 2. \emph{Cluster assignment.} \\
%\noindent 3. \emph{Centroid states creation.} \\
%\noindent 4. \emph{Centroid update.}

This procedure as discussed in Section 5 of Des-q \cite{kumar2023desq} results in the following theorem,
%This procedure as discussed in Des-q \cite{kumar2023desq} and as also detailed in Appendix \ref{sec:clustering} for completeness results in the following theorem,
\begin{theorem} 
    Given quantum access to the dataset $D_N$ for QC Forest-BUILD or $D_{N + N_{\text{new}}}$ for QC Forest-RETRAIN in time $T = \mathcal{O}(\text{poly} \log(Nd))$, the supervised $q$-means algorithms takes $K$ iterations to output with high probability the centroids $c_1, \cdots, c_k$ that are arbitrarily close to the output centroids of the $\delta$-$k$-means algorithm, a robust version of $k$-means, in time complexity,
    \begin{equation}
    \mathcal{O}\left(poly \log(Nd) \frac{K k^3d \log (k)\log(p)\log(1/\Delta)\eta_1\eta_2}{\epsilon_1\epsilon_2}\right)
    \end{equation}
    where $\eta_1 = \text{max}_i(\|\boldsymbol{x}_i\|^2)$, $\eta_2 = \text{max}_{l \in [d]}\|\boldsymbol{x}^{l}\|$, $p$ is the number of qubits to encode the distance between the weighted centroids and training examples, and $\epsilon_1, \epsilon_2, \Delta > 0$. 
\end{theorem}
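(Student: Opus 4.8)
## Proof Proposal

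The plan is to reduce the claimed complexity for supervised $q$-means to the complexity of the original $q$-means algorithm of Kerenidis \etal~\cite{kerenidis2019q}, absorbing the ``supervised'' modification into a preprocessing step that does not change the asymptotic cost. The key observation — already recorded in Section~\ref{sec:weighted_clustering} and in the discussion of Step~3 in QC-Forest-BUILD — is that the weighted Euclidean distance $\|\boldsymbol{x}_i - \boldsymbol{c}_l\|_w$ is exactly the ordinary Euclidean distance between $\boldsymbol{x}_i$ and the weighted centroid $\boldsymbol{c}_{lw} = \boldsymbol{w}\circ\boldsymbol{c}_l$ up to terms that depend only on the fixed weights and the fixed sample (and hence do not affect the $\arg\min$ over $l$ in the cluster-assignment step). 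Concretely, I would first expand $\|\boldsymbol{x}_i - \boldsymbol{c}_l\|_w^2 = \sum_j w_j x_{ij}^2 - 2\sum_j w_j x_{ij} c_{lj} + \sum_j w_j c_{lj}^2$ and note that the only $l$- and $i$-jointly-dependent term is $-2\,I(\boldsymbol{x}_i, \boldsymbol{c}_{lw})$, the ordinary inner product with the weighted centroid. Thus the quantum distance-estimation and minimum-finding subroutines of $q$-means can be invoked verbatim, with $\{\boldsymbol{c}_{lw}\}$ in place of $\{\boldsymbol{c}_l\}$.

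Next I would set up the quantum-access bookkeeping. By hypothesis we have quantum access to $D_N$ (resp. $D_{N+N_{\text{new}}}$) in time $T = \mathcal{O}(\mathrm{poly}\log(Nd))$ via the KP-tree of Step~2. At the start of each of the $K$ iterations, the classical weighted centroids $\{\boldsymbol{c}_{lw}\}_{l=1}^k$ are computed by element-wise multiplication in time $\mathcal{O}(kd)$ and loaded into a small KP-tree in time $\mathcal{O}(kd\log^2(kd))$, giving quantum query access to them in time $\mathcal{O}(\mathrm{poly}\log(kd))$. With both data structures in place, I would then quote Theorem~4.1/Section~5 of \cite{kumar2023desq} (itself building on Theorem~4 of \cite{kerenidis2019q}): one iteration of $q$-means — distance estimation to precision controlled by $\epsilon_1$ with success probability $1-2\Delta$, minimum finding, and centroid update to precision controlled by $\epsilon_2$ — runs in time $\mathcal{O}\!\left(\mathrm{poly}\log(Nd)\,\frac{k^3 d\log(k)\log(p)\log(1/\Delta)\eta_1\eta_2}{\epsilon_1\epsilon_2}\right)$, and the output centroids are, with high probability, arbitrarily close to those of $\delta$-$k$-means. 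Multiplying by the $K$ iterations and noting that the weighted-centroid recomputation per iteration is lower-order gives the stated bound. The correctness claim (closeness to $\delta$-$k$-means output) transfers because $\delta$-$k$-means is defined purely in terms of the distance oracle, and substituting the weighted distance just changes which fixed-point configuration is being approximated, not the approximation guarantee.

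The main obstacle — and the place where I would be most careful — is the error-propagation argument across the two reductions. Two things need checking: first, that the additive error $\epsilon_1$ in estimating $I(\boldsymbol{x}_i,\boldsymbol{c}_{lw})$ translates into a controlled error in the weighted distance comparison, which requires tracking the norm factors $\eta_1 = \max_i \|\boldsymbol{x}_i\|^2$ and $\eta_2 = \max_{l\in[d]}\|\boldsymbol{x}^l\|$ through the amplitude-encoding normalization (the weights $w_j \le 1$ for the $\eta$-coefficient and $|w_j|\le 1$ for Pearson help here, so the weighted centroids do not blow up the relevant norms); and second, that the per-iteration failure probability $2\Delta$ and the $\delta$-robustness tolerance compose correctly over $K$ iterations and over the $k$ centroids without introducing extra factors beyond the $\log(1/\Delta)$ already present. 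Since \cite{kumar2023desq} has already carried out this analysis for the single-tree case, the honest work here is to verify that nothing in the RF/ensemble wrapper (in particular, using a sampled subset $\mathcal{S}_n$ rather than all of $D_N$, and retraining on $\mathcal{S}_n\cup\tilde{\mathcal{S}}_n$) changes these constants — it does not, because $|\mathcal{S}_n| = \Theta(N)$ and the bound is poly-logarithmic in the data size regardless. I would therefore present the proof as: (i) the weighted-distance-to-weighted-centroid identity, (ii) invoke the $q$-means iteration complexity of \cite{kerenidis2019q,kumar2023desq} with the substituted oracle, (iii) multiply by $K$ and absorb the $\mathcal{O}(kd)$ classical weighting cost, (iv) conclude the $\delta$-$k$-means closeness by the oracle-only characterization of $\delta$-$k$-means.
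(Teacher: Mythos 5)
Your proposal is correct and follows essentially the same route as the paper: the paper gives no standalone proof of this theorem, deriving it directly from Section 5 of Des-q, and your argument --- rewriting the weighted distance as an ordinary inner product with the weighted centroids and then invoking the $q$-means/Des-q per-iteration complexity multiplied by the $K$ iterations --- is precisely the reasoning that citation encapsulates. The additional care you take with error propagation and the lower-order $\mathcal{O}(kd)$ classical weighting cost goes beyond what the paper writes down, but does not constitute a different approach.
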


Further, repeated application of this procedure creates a depth-$D$ tree in time,
\begin{equation}
    \mathcal{O}\left(\emph{poly} \log(Nd) \frac{K D k^{3D}d \log^2 (k)\log^2(p)\log^2(1/\Delta)\eta_1^2\eta_2}{\epsilon_1^2\epsilon_2}\right)
    \label{eq:depth0T}
\end{equation}

\subsection{Leaf Label Assignment} \label{sec:leaf_assign}

Once each tree in the random forest has been constructed, or retrained, each tree has at most $k^D$ leaf nodes. Each of these leaf nodes will have assigned leaf labels. The $j$-th leaf node for a given tree corresponds to the cluster $\mathcal{C}_{j, D}$. For the task of the regression, its label value is simply the mean of the label values of the samples in the cluster as shown below,
\begin{equation}
    \emph{label}_j = \frac{1}{|\mathcal{C}_{j, D}|} \sum_{i \in \mathcal{C}_{j,D}} y_i
    \label{eq:mean-label}
\end{equation}

%Kumar \etal \cite{kumar2023desq} has shown how to efficiently get the leaf labels for the task of regression. Refer to Theorem \ref{theo:regression} in Appendix.  

For the task of classification with label set $\mathcal{M} \in \{l_1,\cdots,l_{|\mathcal{M}|}\}$, we propose to output the relative occurrence of each class among the samples in the cluster, 
\begin{equation}
\emph{label}_j = \{p_{jl}\}_{l \in [\mathcal{M}]} 
\end{equation}

where,
\begin{equation}
   p_{jl} = \frac{N_{jl}}{N_{j}}
    \label{eq:mean-label-class}
\end{equation}

where $N_{jl}$ is the number of samples in the $m$-th cluster with label $l_{l}$ and $N_{j}$ is the total number of samples in that leaf node, i.e., $N_{j} = |\mathcal{C}_{j, D}|$. \\ 

The method for computing labels for regression task has already been introduced by Des-q \cite{kumar2023desq}. Our main contribution is the introduction of the method for multi-class classification. We show how to efficiently get a classical estimate of the labels for multi-class classification task in the theorem below whose proof is provided in Appendix~\ref{appendix:leaf_assign}.

Note that having access to these probabilities for each tree in the ensemble allows for combining the results corresponding to each tree using different techniques. How the output of each tree is aggregated impacts the accuracy of the RF model. Refer to Appendix \ref{appendix:decay} for a discussion on two different methods for aggregating the results and their impact on accuracy. Moreover, for a given leaf node, having access to the estimated probability of each class opens up the opportunity to fine-tune the threshold to assign a class to the leaf node. This is a major improvement over Des-q \cite{kumar2023desq} that assumed a threshold $t=0.5$ limited to the binary case. In principle, the threshold depends on the dataset used. If the dataset is unbalanced, taking $t=0.5$ would result in low accuracy of the model's prediction. Refer to Appendix \ref{appendix:tuning} for a comparison and discussion of different threshold values for an unbalanced dataset.

% \begin{theorem}[Time complexity of leaf label assignment for regression \cite{kumar2023desq}]    
%     Given access to the characteristic vector states $|\xi_j\rangle  = \frac{1}{\sqrt{|\mathcal{C}_{j,D}|}}\sum_{i \in \mathcal{C}_{j,D}}\ket{i}$, $\forall j \in [k^D]$, prepared in time $T_\xi$, where $\mathcal{C}_{j,D}$ is the cluster corresponding to the j-th leaf node, and the label superposition state $|Y\rangle = \frac{1}{\|Y\|}\sum_{i=1}^N y_i\ket{i}$, which is prepared in time $\mathcal{O}(\text{poly} \log(N))$, there exists a quantum algorithm to obtain the mean label values $\{\overline{\text{label}_1}, \cdots, \overline{\text{label}_{k^D}}\}$ such that $|\overline{\text{label}_j} - \text{label}_j| \leq \epsilon_3$, $\forall j \in [k^D]$ in time 
%     $$T_{\text{leaf-label}} =\mathcal{O}\left(\frac{T_\xi k^D\eta_3}{\epsilon_3}\right)$$ 
%     where $T_\xi = \mathcal{O}\left(poly \log(Nd) \frac{ Dk^{2D}d \log (k)\log(p)\log(1/\Delta)\eta_1}{\epsilon_1}\right)$ with $\eta_1 = \text{max}_{i \in [N]}(\|\boldsymbol{x}_i\|^2)$, $p$ being the number of qubits used in the clustering step to encode the distance between weighted centroids and training examples, $\Delta > 0$, $\text{label}_j$ is the true label mean of the samples in the cluster, and $\eta_3 = \|Y\|$. 
%     \label{appthm:cent_update}
% \end{theorem}

\begin{theorem}[Time complexity of leaf label assignment for classification]
    Given access to the characteristic vector states $|\xi_{jl}\rangle = \frac{1}{\sqrt{|\mathcal{C}_{jl, D}|}} \sum_{i \in \mathcal{C}_{jl,D}}\ket{i}$, $\forall j \in [k^D], l \in [\mathcal{M}]$ prepared in time $T_{\xi_{jl}}$, where $\mathcal{C}_{jl,D}$ is the cluster of j-th leaf node with label $l$, and the label superposition state $|Y\rangle$, which is be prepared in time $\mathcal{O}(\text{poly} \log(N))$, there exists a quantum algorithm to estimate occurrence value $\{\overline{N_{jl}}\}_{j\in [k^D], l \in [\mathcal{M}]}$ such that $|\overline{N_{jl}} - N_{jl}| \leq \epsilon_4$, in time 
    $$T_{\text{leaf-label}} =\mathcal{O}\left(\frac{T_{\xi_{jl}}|\mathcal{M}| k^D\eta_3}{\epsilon_4}\right)$$ 
    where $T_{\xi_{jl}} = T_{\xi}\cdot \mathcal{O}(\log(|\mathcal{C}_{j,D}|\log (|\mathcal{M}|))\log(\mathcal{M}))$,  $N_{jl}$ being the true occurrence value of each label in the cluster, and $\eta_3 = \|Y\|$. 
    \label{appthm:cent_update-class}
\end{theorem}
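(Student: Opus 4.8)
The plan is to reduce the estimation of each occurrence value $N_{jl}$ to an inner-product (overlap) estimation between the characteristic vector state $\ket{\xi_{jl}}$ and the normalized label state $\ket{Y}$, and then to account for the cost of preparing $\ket{\xi_{jl}}$ and for running the estimation over all $j,l$. First I would observe that $\braket{\xi_{jl}}{Y}$ is, up to known normalization factors, essentially a sum of label entries over the cluster $\mathcal{C}_{jl,D}$; since $\ket{Y} = \frac{1}{\|Y\|}\sum_i y_i \ket{i}$ and $\ket{\xi_{jl}} = \frac{1}{\sqrt{|\mathcal{C}_{jl,D}|}}\sum_{i\in\mathcal{C}_{jl,D}}\ket{i}$, their overlap is $\frac{1}{\|Y\|\sqrt{|\mathcal{C}_{jl,D}|}}\sum_{i\in\mathcal{C}_{jl,D}} y_i$. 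The subtlety is that $\mathcal{C}_{jl,D}$ is defined as the samples in leaf $j$ \emph{with label $l$}, so each $y_i$ in that sum equals $l$ (or, after the standard one-hot/indicator reformulation used for classification, equals $1$), which makes the overlap proportional to $N_{jl}$ itself — this is the point that turns an overlap estimate into an occurrence estimate.

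Next I would invoke a standard amplitude/overlap estimation primitive (e.g.\ the inner-product estimation subroutine used throughout Des-q and in the supervised $q$-means step of this paper) to estimate $\braket{\xi_{jl}}{Y}$ to additive precision scaling like $1/(\text{number of repetitions})$; propagating the normalization $\|Y\| = \eta_3$ and $\sqrt{|\mathcal{C}_{jl,D}|}$ through, an additive error $\epsilon_4$ in $N_{jl}$ requires $\mathcal{O}(\eta_3/\epsilon_4)$ uses of the state-preparation routine for $\ket{\xi_{jl}}$ and $\ket{Y}$. Then I would multiply by the number of (leaf, label) pairs, $k^D|\mathcal{M}|$, since each must be estimated separately, and by the single-shot preparation cost $T_{\xi_{jl}}$, to obtain the claimed bound $\mathcal{O}\!\left(\frac{T_{\xi_{jl}}|\mathcal{M}|k^D\eta_3}{\epsilon_4}\right)$. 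The factor $T_{\xi_{jl}} = T_\xi\cdot\mathcal{O}(\log(|\mathcal{C}_{j,D}|\log|\mathcal{M}|)\log|\mathcal{M}|)$ I would justify by describing how $\ket{\xi_{jl}}$ — the indicator state of the label-$l$ subset of a leaf — is built from the leaf characteristic state $\ket{\xi_{j}}$ (prepared in time $T_\xi$ as in the clustering analysis) together with a comparison against the label register, the $\log$ factors coming from the bit-precision of encoding labels and cluster sizes and from the binary-search/amplitude-filtering step that selects the label-$l$ subspace; preparing $\ket{Y}$ in $\mathcal{O}(\text{poly}\log N)$ is assumed by hypothesis and is dominated by the other terms.

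The main obstacle I anticipate is the precise bookkeeping of the two issues that make this more than a bare overlap estimate: (i) rigorously showing that conditioning the leaf characteristic state on the label value $l$ can be done with only the stated logarithmic overhead $T_{\xi_{jl}}/T_\xi$, rather than incurring an extra $1/\sqrt{p_{jl}}$ amplitude-amplification penalty that would blow up for rare classes — this presumably relies on querying the (already loaded) label register coherently and post-selecting a flag qubit, with success probability folded into the $\epsilon_4$-vs-$\eta_3$ tradeoff rather than into the runtime; and (ii) carefully tracking how the normalization constants $\|Y\|$ and $\sqrt{|\mathcal{C}_{jl,D}|}$ enter, so that the \emph{additive} guarantee $|\overline{N_{jl}} - N_{jl}|\le\epsilon_4$ (as opposed to a multiplicative or relative-frequency guarantee) comes out with exactly the stated $\eta_3/\epsilon_4$ dependence. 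Once those two points are pinned down, assembling the final product of costs is routine. The downstream conversion to the relative frequencies $p_{jl} = N_{jl}/N_j$ and the resulting constant $C_3$ quoted in Theorem~\ref{theo:complexity_retrain} then follows by dividing by the (separately estimated) leaf size and standard error propagation, which I would relegate to the appendix.
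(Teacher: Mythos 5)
Your proposal follows essentially the same route as the paper's proof in Appendix~C: the overlap $\langle\xi_{jl}|Y\rangle = \frac{\sqrt{N_{jl}}}{\|Y\|}\,l_l$ (proportional to $\sqrt{N_{jl}}$ rather than $N_{jl}$; one squares the amplitude-estimated inner product to recover $N_{jl}$) is estimated via a Hadamard-test-plus-amplitude-estimation circuit at cost $\mathcal{O}(\eta_3/\epsilon_4)$, the state $\ket{\xi_{jl}}$ is obtained from $\ket{\xi_j}$ by coherently querying a binarized label register and measuring it $\mathcal{O}(|\mathcal{M}|\log|\mathcal{M}|)$ times (a coupon-collector-style post-selection, which is exactly the paper's answer to your worry about a $1/\sqrt{p_{jl}}$ amplification penalty), and the cost is multiplied over the $k^D|\mathcal{M}|$ leaf-label pairs. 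Both subtleties you flag are resolved essentially as you anticipate, so the two arguments coincide.
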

\begin{proof}
    It is provided in Appendix~\ref{appendix:leaf_assign}
\end{proof}

\section{Numerical simulations: benchmark accuracy of predictions \label{section:numerics}}

The evaluation of quantum algorithms for practically-relevant use cases at the industrial scale is still far from realization as the near-term quantum hardware is characterized of having small number of noisy qubits. Moreover, there is still no commercial access to quantum-accessible data structures, which is essential for demonstrations with big datasets of the proposed algorithm. However, in our work, we evaluate the classical version of \algo, named QC-Forest-c, to quantify the accuracy of the predictions. QC-Forest-c leverages the $k$-means clustering algorithm whose performance closely aligns with robust $\delta$-k-means as shown by Kerenidis \etal \cite{kerenidis2019q}. Moreover, $\delta$-k-means is a good approximation of the performance of $q$-means. Therefore, the results obtained with \algo-c are a good approximation of the performance expected for \algo~when executed in the hybrid classical-quantum fashion. 

We showcase that QC-Forest-c has competitive performance against state-of-the-art RF methods for standard benchmark datasets of multiple sizes of both the number of samples ($N$) and features ($\#$ features) and for a variety of tasks: regression and binary and multi-class classification. The classical state-of-the-art methods, such as the one implemented in scikit-learn \cite{scikit-learn}, perform both sampling with replacement and feature bagging. For these experiments, \algo-c implements sampling with replacement and not feature bagging, which consists of selecting a random subset of the features at each candidate split. The reason is that if one or a few features are very strong predictors for the output labels, these features will be selected in many of the trees in the ensemble, making them correlated. \algo~is very much different in this sense as we do clustering in order to make the split and for this, we consider all the features in the dataset. For this reason, in below experiments we compare to a baseline that is an ensemble of single trees built with the classes in sklearn.tree and the same input data that is sampled and replaced from the datasets that is also used by QC-Forest-c to build each tree. For a comparison of QC-Forest-c to the RF classes in scikit-learn refer to Appendix \ref{appendix:comparison_random_forest}.

For each dataset, we create train-test folds and for each fold we construct and evaluate an ensemble of $100$ trees of different tree depths ($D$) and different number of clusters ($k$). We report the results obtained corresponding to the lowest values of $kD$ to achieve the highest performance because the inference time is $\mathcal{O}(kDd)$. For classification, the metric used is the area under the Receiver Operating Characteristic (ROC) Curve, also known as ROC AUC, and for regression is the root mean square error (RMSE). We report the median and the standard deviation across the folds in Table \ref{table:datasets}, which also contains the characteristics of the dataset utilized. 

For all the datasets considered, there is overlap between the metrics obtained with \algo-c and the baseline. Therefore, we affirm that \algo~obtains a competitive performance to the baseline method. Note that we report results corresponding to ensembles of shallow trees. We see a decay in performance for deeper trees due to overfitting (see Appendix \ref{appendix:decay} for details).

As mentioned, one contribution of our work is that QC-Forest can support multi-class classification by incorporating an efficient technique to calculate the $\eta$ coefficient. The performances reported in Table \ref{table:datasets} correspond to utilizing this coefficient for multi-class classification. For binary classification and regression, in principle, both the $\eta$ coefficient and the Pearson correlation could be used. Refer to \ref{appendix:technical_details} to see which method is used in each case. 

For the inference we can utilize different notions of distances. We have seen that for all the datasets considered, the weighted distance significantly improves the ROC AUC values. The values reported for \algo-c in Table \ref{table:datasets} correspond to inference using the weighted distance. Refer to Appendix \ref{appendix:weighted_distance} for a discussion about how using the weights improve performance.

\begin{table}[h!]
\caption{Performance in test of \algo-c and baseline: median and the standard deviation across the folds. The maximum depth of the trees ($D$) and the maximum number of clusters (number of children of each internal node) ($k$) are reported together with the characteristics of the dataset: \emph{N}: number of samples, \emph{$\#$ features}: number of features, \emph{classes}: number of classes, and \emph{feature type} refers if the features are numerical (\emph{NUM}) and/or categorical (\emph{CAT}). Note that "Boston Housing" is the only dataset for regression.  \label{table:datasets}}
%\vskip 0.15in
\begin{center}
\begin{small}
\begin{sc}
\begin{tabular}{lccccccccr}
\toprule
Dataset & N & $\#$ features & Class & Feature type & D & k & \algo-c & baseline   \\
\midrule
Porto Seguro \cite{porto-seguro-safe-driver-prediction} & 60000 & 38 & 2 & Num & 2 & 4 &0.612$\pm$0.003& 0.616$\pm$0.004  \\
Spambase\cite{misc_spambase_94}   & 4601 & 57 & 2 & Num & 2 & 2 &0.936$\pm$0.023& 0.932$\pm$0.022  \\
German\cite{misc_statlog_(german_credit_data)_144}     & 1000 & 47 & 2 & Num $\&$ Cat  & 3 & 3 & 0.753$\pm$0.011& 0.741$\pm$0.017 \\
Wine \cite{misc_wine_109}     & 178 & 13 & 3  & Num  & 1 & 3 & 1.0$\pm$0.015& 0.996$\pm$0.006 \\
PIMA \cite{pima-indians-diabetes}      & 768 & 8 & 2 & Num & 3 & 4 &0.821$\pm$0.043&0.829$\pm$0.042  \\
Cars \cite{misc_car_evaluation_19}     & 1728 & 6 & 4 & Cat & 3 & 2 &1.000$\pm$0.000& 1.000$\pm$0.000  \\ 
Boston housing \cite{harrison1978hedonic} & 506  & 8 & - & Num  & 2 & 5 &0.037$\pm$0.008& 0.037$\pm$0.009  \\  
%\bottomrule
\end{tabular}
\end{sc}
\end{small}
\end{center}
\vskip -0.1in
\end{table}

% with the same fraction of positives and negatives labels
We assess the performance when doing retraining with incremental batches of data with QC-Forest-c using the Porto Seguro dataset for binary classification. For this, we utilize the same parameters $D$ and $k$ reported in Table \ref{table:datasets}. We randomly sample batches of data to simulate access to new training samples from the same distribution over time. We performed incremental retraining using up to $5300$ samples to train. At each incremental retrain with old and new data, we evaluate on the same test data. We assess the performance by utilizing two different sizes of the evaluation test, while preserving the same fraction of the labels in the original dataset. To show some statistics, we repeat the sampling to create the batches of data and perform the incremental retraining followed by the evaluation on test five times. We show the results in Fig. \ref{fig:retraining}, where in (a) the test set size is $250$, being this the $0.25$ of the size of the first batch of data used to train and in (b) the test set is of size $4787$. For each of these experiments, we have seen that given the size of the data set used, at some point of the continuous incremental retraining, the RF model starts overfitting. For this reason, we report up to the values displayed, corresponding to $4000$ in (a) and $5300$ in (b). 

\begin{figure*}
\begin{subfigure}[t]{0.5\textwidth}
        \centering
        \includegraphics[height=2in]{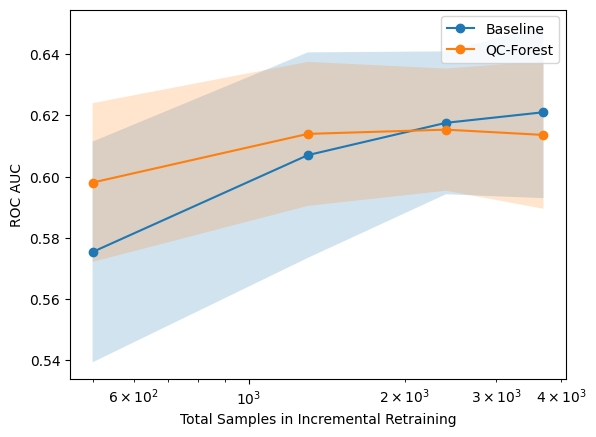}
    \end{subfigure}%
    ~ 
    \begin{subfigure}[t]{0.4\textwidth}
        \centering
        \includegraphics[height=2in]{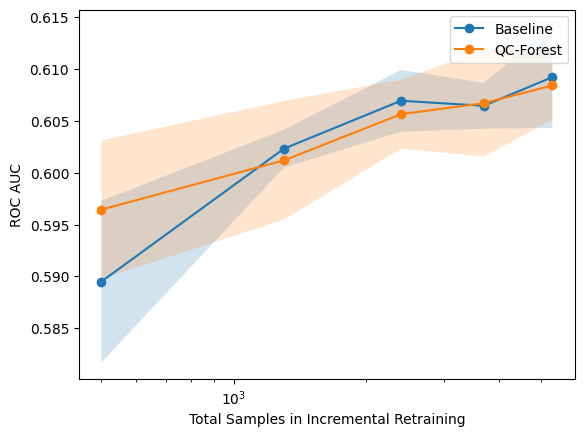}
    \end{subfigure}
\caption{ROC AUC as a function of the number of samples used to do incremental retraining. (a) corresponds to a test set of size 250 samples, corresponding to a fraction of $0.25$ of the size of the first batch used and (b) corresponds to a test size of $4787$ (total samples used in retraining are 5300). Orange line corresponds to the median value obtained with QC-Forest-c and the blue line to the one obtained with the baseline. These are the median over the five different sampling experiments and the shaded area corresponds to the standard deviation obtained. }
\label{fig:retraining}
\end{figure*}

It can be seen an improvement on the accuracy metric, the ROC AUC, as the total number of samples used in the retraining is increased, when utilizing both methods: the baseline and QC-Forest-c. What is more, it can be state that the performance obtained with QC-Forest is competitive to the baseline. However, the baseline method has a complexity linear in the number of samples, making it inefficient to handle big sets of data. In contrast, QC-Forest performs the retraining efficiently, allowing to perform the retraining at large scale and account for big number of samples. Note that how much improvement can be achieved by incremental retraining depends on the tasks and the dataset. 

On top of the comparison of the performance of QC-Forest to the baseline we did in terms of accuracy, it is interesting to make the question is the proposed method also reduces an impurity measure as the tree grows. As discussed in Section \ref{section:intro}, the state-of-the-art methods, based on threshold feature splitting, perform the split based on the idea of selecting a feature and threshold that reduce an impurity measure the most. QC-Forest does not perform the split based on this but rather by performing clustering. An interesting question is if QC-Forest manages to reduce an impurity measure as grows the tree. To answer this with numerical simulations, we utilize the Porto Seguro data set and construct the RF model with $D=5$ and $k=2$, and we calculate the total entropy at each depth of the tree. This corresponds to the sum of the entropy at each node weighted by the fraction of samples in each node. We also construct a RF model using QC-Forest that does not use feature weights, being the clustering unsupervised using the Euclidean distance. 

In Fig. \ref{fig:entropy}, we compare how the entropy is reduced as we grow on single tree using QC-Forest using $\eta$ coefficient as feature weights (supervised) and using unsupervised clustering and compare to the baseline, which is based on threshold feature splitting that minimizes the entropy. Note that the number of trees in the ensemble is one. We have seen similar behaviour for all the trees in the ensemble. We observe that for $D=2$, QC-Forest-c manages to reduce the entropy, overlapping with the baseline. For higher depths, the baseline method is much effective in reducing the entropy. The limitation in the reduction of the entropy we encounter with QC-Forest is related to the overfitting we also encounter for higher depths (refer to Appendix \ref{appendix:decay}). What is more, it can be seen that for tree depth higher than two ($D > 2$), the supervised clustering contributes to reduce the entropy more. This is also related to the improvement in the accuracy, measure by the ROC AUC, we have seen when incorporating the feature weights as we discuss in Appendix \ref{appendix:weighted_distance}.

\begin{figure*}
    \centering
    \includegraphics[height=2in]{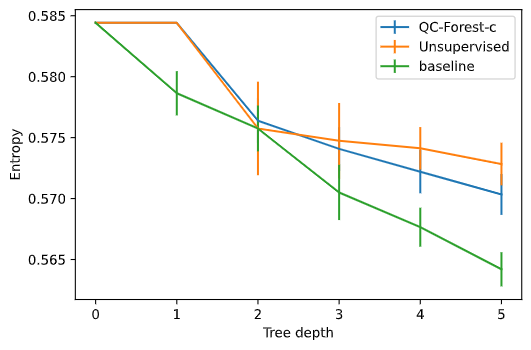}
\caption{Entropy of a single tree at each depth as a function of the depth. The entropy at each depth corresponds to the sum of the entropy of the labels at each node weighted by the fraction of training samples at each node. We compare QC-Forest-c to the baseline. ``Unsupervised'' refers to QC-Forest-c that performs unsupervised clustering that implements Euclidean distance ($w_j = 1$, $\forall j$, where $w_j$ is the weight of the $j$-th feature). This corresponds to one tree over the ensemble trained over the five folds. The lines correspond to the median values and the bars to the standard deviation across the folds. }
\label{fig:entropy}
\end{figure*}

\section{Conclusions \label{section:conclusion}}

We introduce \algo, a hybrid classical-quantum algorithm for construction and time-efficient retraining of random forest. It leverages supervised $q$-means clustering that incorporates label information into the model by using a distance definition that is weighted with feature weights, which capture the dependency of the features on the labels. \algo~achieves a significant speedup over standard methods in classical computation, which scale linear on $N$, when performing retraining in the streaming setting. In this setting data accumulates in small, and periodic increments and \algo~scales poly-logarithmically on the total number of samples ($N$)  when performing retraining with the accumulated data, i.e., incremental retraining. To the best of our knowledge, this is the first time that such time-efficient incremental retraining has been proposed for RF models. 

\algo~successfully tackles the limitations of the hybrid classical-quantum Des-q \cite{kumar2023desq} algorithm for construction and efficient retraining of single trees by: (1) extending to multi-class classification by introducing an efficient classical method to exactly calculate the $\eta$ coefficient to be used as a feature weights for multi-class classification and by also proposing an efficient quantum method to estimate the probability of each label class from the trees' leaf nodes and (2) introducing an efficient classical method to exactly calculate feature weights. What is more, Des-q has shown advantage over classical methods for single tree models for binary classification and regression. However, state-of-the-art ML models for those tasks are in the setting of ensemble of trees. \algo~ is used in the setting, paving the way to the development of quantum algorithms that improve over classical methods that construct state-of-the-art ML models to tackle the same tasks.

We have showcased that \algo~has competitive performance in terms of the accuracy of the predictions against state-of-the-art RF methods for widely used benchmark datasets containing different number of samples, up to 80,000, and number of features and for a variety of tasks: regression, and binary and multi-class classification. These benchmarks show that when the quantum hardware matures allowing for implementation and adoption of \algo, the its produced models will achieve competitive performance to the standard classical methods, while significantly speeding up the model retrain in the streaming setting, enabling the train with large data that accounts for long periods of time.  

%\ry{I feel that it makes sense to discuss this a little bit. This is a widely used method in literature, we could refer to some arguments to the 2022 NeurIps paper listed below that explain more drawbacks in this approach and explain why RELIEF does not work}
% to read:  http://sibgrapi.sid.inpe.br/col/sid.inpe.br/sibgrapi/2020/09.25.14.27/doc/PID6618233.pdf
% to read: https://arxiv.org/pdf/2307.04850.pdf
%http://proceedings.mlr.press/v130/kwon21a/kwon21a.pdf
%https://proceedings.neurips.cc/paper_files/paper/2022/file/de1739eba209c682a90ec3669229ab2d-Paper-Conference.pdf
%The methods that in practice we rely to calculate Shapley and SHAP, such as sampling and kernel-based methods, stem from the fact that the probability distribution of the examples does not change over time. In this setting it seems hard to find a method that can utilize previous knowledge (e.g., the shapley values calculated with the samples from Monte Carlo sampling) and actually benefit from it. The probability distribution changes over time and it seems that the previously calculated values now does not capture the new trend.  
 
%\ry{explainability: we have to comment a little bit here because we do not do the split based on features but on clustering, so how can we explain what produce a classification?}

\section*{Acknowledgments}
The authors thank our colleagues at the Global Technology Applied Research Center of JPMorgan Chase for helpful comments and discussions, in particular to Rudy Raymond, Rajagopal Ganesan and Alex Buts for their contributions to the numerical simulations.  

\section*{Data Availability}
The data presented in the paper is available at \url{https://doi.org/10.5281/zenodo.12610367}.

\section*{Disclaimer}

This paper was prepared for informational purposes by the Global Technology Applied Research center of JPMorgan Chase $\&$ Co. This paper is not a product of the Research Department of JPMorgan Chase $\&$ Co. or its affiliates. Neither JPMorgan Chase $\&$ Co. nor any of its affiliates makes any explicit or implied representation or warranty and none of them accept any liability in connection with this paper, including, without limitation, with respect to the completeness, accuracy, or reliability of the information contained herein and the potential legal, compliance, tax, or accounting effects thereof. This document is not intended as investment research or investment advice, or as a recommendation, offer, or solicitation for the purchase or sale of any security, financial instrument, financial product or service, or to be used in any way for evaluating the merits of participating in any transaction.

\newpage

% In the unusual situation where you want a paper to appear in the
% references without citing it in the main text, use \nocite
\nocite{langley00}

\bibliographystyle{ieeetr}
\bibliography{refs}

%%%%%%%%%%%%%%%%%%%%%%%%%%%%%%%%%%%%%%%%%%%%%%%%%%%%%%%%%%%%%%%%%%%%%%%%%%%%%%%
%%%%%%%%%%%%%%%%%%%%%%%%%%%%%%%%%%%%%%%%%%%%%%%%%%%%%%%%%%%%%%%%%%%%%%%%%%%%%%%
% APPENDIX
%%%%%%%%%%%%%%%%%%%%%%%%%%%%%%%%%%%%%%%%%%%%%%%%%%%%%%%%%%%%%%%%%%%%%%%%%%%%%%%
%%%%%%%%%%%%%%%%%%%%%%%%%%%%%%%%%%%%%%%%%%%%%%%%%%%%%%%%%%%%%%%%%%%%%%%%%%%%%%%
%\newpage
\appendix
%\onecolumn

%\tableofcontents

\section{Data Loading} \label{appendix:data_loading}

Given classical data $D_{N}=\{(\boldsymbol{x}_i, y_i)\}_{i=1}^{N}$ where each $(\boldsymbol{x}_i, y_i) \in \mathbb{R}^{d} \times \mathcal{M}$, with $\mathcal{M} = \{l_1, \dots, l_c\}$, the data loading procedure ensures the ability to efficiently query the data as quantum amplitude encoding states. For each row $\boldsymbol{x}_i = (x_{i1},\cdots,x_{id}) \in \mathbb{R}^d$, the quantum amplitude encoding is the $\lceil\log(d)\rceil$ qubits encoded state of the form,
\begin{equation}
    \ket{\boldsymbol{x}_i} = \frac{1}{\|\boldsymbol{x}_i\|} \sum_{j=1}^d x_{ij} \ket{j}
    \label{eq:amp_enc}
\end{equation}
Additionally, the encoding of the matrix $X = [\boldsymbol{x_1},\cdots, \boldsymbol{x_N}]^T \in \mathbb{R}^{N \times d}$ takes the following form,
\begin{equation}
    \ket{X} = \frac{1}{\|X\|_F} \sum_{i=1}^N \|\boldsymbol{x}_i\| \ket{\boldsymbol{x}_i} \ket{i}
    \label{eq:amp_enc_matrix}
\end{equation}
where $\ket{\boldsymbol{x}_i}$ is according to Equation \ref{eq:amp_enc}, and $\|X\|_F = \sqrt{\sum_{i=1}^N \|\boldsymbol{x}_i\|^2}$ is the matrix's Frobenius norm.

To have a time-efficient (poly-logarithmic in input size) method to prepare the states given in Eq~\ref{eq:amp_enc} and Eq~\ref{eq:amp_enc_matrix}, we leverage the quantum-accessible-data structure, KP-tree, proposal of Kerenidis \etal \cite{kerenidis2016quantum}. The KP-tree data structure is a quantum-accessible-data structure because it is a classical tree-like data structure stored in a quantum read-only-memory and accessed via superposition. This facilitates the creation of amplitude encoding states as highlighted in the Lemmas below. 

\begin{lemma} [KP-Tree data structure \cite{kerenidis2016quantum}] \label{Thm:KP-tree}
    Let $X \in \mathbb{R}^{N \times d}$ be a given dataset. Then there exists a classical data structure to store the rows of $X$ with the memory and time requirement to create the data structure being $T_{kp} = \mathcal{O}(Nd \log^2(Nd))$ such that, there is a quantum algorithm with access to the data structure which can perform the following operations (and also in superposition) in time $T = \mathcal{O}(\text{poly}\log (Nd))$,
    \begin{align}
        \ket{i}\ket{0} &\rightarrow \ket{i} \frac{1}{\|\boldsymbol{x}_i\|}\sum_{j=1}^d x_{ij}\ket{j} \\
        \ket{0} &\rightarrow \frac{1}{\|X\|_F} \sum_{i=1}^N \|\boldsymbol{x}_i\| \ket{i}
    \end{align}
\end{lemma}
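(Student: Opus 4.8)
The plan is to exhibit the classical tree data structure explicitly and then describe the quantum query procedure that reads it in superposition. For each row $\boldsymbol{x}_i$, I would store a complete binary tree $B_i$ with $d$ leaves (padding $d$ to the next power of two if needed): the $j$-th leaf holds the value $x_{ij}^2$ together with the sign bit of $x_{ij}$, and every internal node stores the sum of the values held by the leaves in its subtree, so that in particular the root of $B_i$ holds $\|\boldsymbol{x}_i\|^2$. On top of these, I would store one more complete binary tree $B$ with $N$ leaves whose $i$-th leaf holds $\|\boldsymbol{x}_i\|^2$ (copied from the root of $B_i$) and whose internal nodes again store subtree sums, so that the root of $B$ holds $\|X\|_F^2$. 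All of this lives in a classical memory that the quantum algorithm can address in superposition (a QROM).

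For the construction cost, inserting a single entry $x_{ij}$ amounts to writing the leaf and updating the $\mathcal{O}(\log(Nd))$ ancestors on the root-to-leaf path; each such update is an addition of numbers stored to $\mathcal{O}(\log(Nd))$ bits of precision, costing $\mathcal{O}(\log(Nd))$ time. Summing over the $Nd$ entries (plus the analogous lower-order work to build $B$) gives the claimed $T_{kp} = \mathcal{O}(Nd\log^2(Nd))$, and the same bound governs the memory footprint.

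For the query, I would prepare $\ket{i}\frac{1}{\|\boldsymbol{x}_i\|}\sum_j x_{ij}\ket{j}$ by a sequence of conditioned single-qubit rotations, one per level of $B_i$, starting at the root. At a node $v$ with children holding stored sums $s_L$ and $s_R$, I apply (controlled on $i$ and on the path bits selected so far) the rotation taking $\ket{0}$ to $\sqrt{s_L/(s_L+s_R)}\,\ket{0}+\sqrt{s_R/(s_L+s_R)}\,\ket{1}$, with the angle computed on the fly from the QROM values read in superposition. After $\lceil\log d\rceil$ steps the amplitude on $\ket{j}$ telescopes to $|x_{ij}|/\|\boldsymbol{x}_i\|$, and a final pass reading the stored sign bits fixes the signs to give $x_{ij}/\|\boldsymbol{x}_i\|$. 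The second map $\ket{0}\to\frac{1}{\|X\|_F}\sum_i\|\boldsymbol{x}_i\|\ket{i}$ is obtained identically by walking down $B$. Each level is a constant number of superposition queries plus arithmetic on $\mathcal{O}(\log(Nd))$-bit numbers, and there are $\mathcal{O}(\log(Nd))$ levels, so the procedure runs in $\mathcal{O}(\mathrm{poly}\log(Nd))$; since every step is unitary it also applies verbatim with $i$ in superposition.

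The main obstacle, and the point deserving the most care, is the coherent evaluation of the rotation angles: the angle at each node depends on data that must be fetched from the QROM without collapsing the superposition over $i$ and over the partial path, and the division, square root, and arcsine must be carried out reversibly to fixed precision with garbage registers uncomputed. One must also verify that truncating the stored sums and angles to $\mathcal{O}(\log(Nd))$ bits perturbs the output state by only an amount absorbable into the poly-logarithmic overhead. The power-of-two padding of $d$ and $N$ and the bookkeeping of sign bits are routine, but should be stated so that the leaf amplitudes come out exactly $x_{ij}/\|\boldsymbol{x}_i\|$.
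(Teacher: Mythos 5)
Your proposal is correct and is essentially the standard Kerenidis--Prakash construction that the paper itself does not reprove but simply cites: binary trees of squared subtree sums per row plus a top-level tree over the row norms, built in $\mathcal{O}(Nd\log^2(Nd))$ time, and queried by level-by-level controlled rotations with a final sign correction. You also rightly identify the coherent angle computation and precision/garbage bookkeeping as the only delicate points, which matches the treatment in the cited reference.
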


Similar to Lemma~\ref{Thm:KP-tree}, the following two Lemmas load the columns elements of the matrix $X$ which correspond to the feature vectors, and the elements of the label vector $Y$ respectively,
\begin{lemma}[Column vectors superposition] \label{lemma:supcol}
        Let $X \in \mathbb{R}^{N \times d}$ be a given dataset. There exists the KP-tree structure to store the columns of $X$ given by feature vectors $\boldsymbol{x^j}:= (x_{1j}, \cdots, x_{Nj}), j \in [d]$ with $\mathcal{O}(Nd \log^2(Nd))$ memory and time requirement for data structure creation such that, there is a quantum algorithm with access to the data structure which can perform the following operations (and also in superposition) in time $T = \mathcal{O}(\text{poly}\log (Nd))$,
    \begin{align}
       \ket{0} \ket{j} \rightarrow  \left(\frac{1}{\|\boldsymbol{x^{j}}\|}\sum_{i=1}^N x_{ij}\ket{i} \right)\ket{j}  = \ket{\boldsymbol{x^{j}}} \ket{j}
    \end{align}
\end{lemma}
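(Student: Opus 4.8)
The plan is to mirror the construction of the KP-tree data structure from Lemma~\ref{Thm:KP-tree}, but applied to the transpose of the data matrix rather than to $X$ itself. Concretely, I would observe that the column vector $\boldsymbol{x^j} = (x_{1j},\cdots,x_{Nj})^T$ is simply the $j$-th row of $X^T \in \mathbb{R}^{d \times N}$, so storing the columns of $X$ is exactly the same task as storing the rows of $X^T$. Since $X^T$ has $d$ rows and $N$ columns, invoking Lemma~\ref{Thm:KP-tree} verbatim on $X^T$ yields a classical data structure with memory and preparation time $\mathcal{O}(dN\log^2(dN)) = \mathcal{O}(Nd\log^2(Nd))$, together with a quantum algorithm that, given access to this structure, performs $\ket{j}\ket{0}\to\ket{j}\frac{1}{\|\boldsymbol{x^j}\|}\sum_{i=1}^N x_{ij}\ket{i}$ (and its superposition version) in time $\mathcal{O}(\mathrm{poly}\log(Nd))$. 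This is precisely the claimed operation after relabelling the register order, which is a trivial SWAP of the two registers.

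The key steps, in order, are: (i) define $A := X^T$ and note $A_{j,i} = x_{ij}$, so the $j$-th row of $A$ equals $\boldsymbol{x^j}$; (ii) apply Lemma~\ref{Thm:KP-tree} to $A$, obtaining the binary-tree data structure whose leaves at address $j$ store the entries $x_{1j},\ldots,x_{Nj}$ and whose internal nodes store partial squared norms, with creation cost $\mathcal{O}(Nd\log^2(Nd))$; (iii) recall from the proof of Lemma~\ref{Thm:KP-tree} that controlled rotations along the tree path prepare the amplitude-encoded state of a chosen row in $\mathcal{O}(\mathrm{poly}\log)$ depth, and that running this conditioned on an address register in superposition prepares the corresponding superposition of row-states; (iv) conclude that this gives exactly $\ket{0}\ket{j}\to\ket{\boldsymbol{x^j}}\ket{j}$ in time $\mathcal{O}(\mathrm{poly}\log(Nd))$. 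One should also remark that all $d$ columns can be stored simultaneously in a single data structure of the stated size, so no per-feature overhead is incurred.

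Honestly, there is no real obstacle here: the lemma is a direct corollary of Lemma~\ref{Thm:KP-tree} obtained by transposing the input, and the only thing to be careful about is bookkeeping the roles of $N$ and $d$ and confirming that $\mathcal{O}(dN\log^2(dN))$ and $\mathcal{O}(Nd\log^2(Nd))$ are the same bound. If anything deserves a sentence of justification, it is that the quantum-read-only-memory access model and the controlled-rotation state-preparation circuit of \cite{kerenidis2016quantum} are agnostic to whether the stored vectors came from rows or columns of some ambient matrix — they only see a list of $d$ vectors in $\mathbb{R}^N$ — so the black-box invocation is legitimate. I would therefore write the proof as a two or three line reduction rather than reproving the tree construction from scratch.
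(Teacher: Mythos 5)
Your reduction is correct and is exactly what the paper intends: the text introduces this lemma with the phrase ``Similar to Lemma~\ref{Thm:KP-tree}'' and offers no further proof, implicitly treating it as the KP-tree construction applied to the columns of $X$ (equivalently, the rows of $X^T$), which is precisely your argument. The only substantive point worth stating — that the data structure only sees a list of $d$ vectors in $\mathbb{R}^N$ and is agnostic to their origin, and that $\mathcal{O}(dN\log^2(dN)) = \mathcal{O}(Nd\log^2(Nd))$ — you have already made.
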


\begin{lemma}[Label values superposition]  \label{lemma:suplabel}
        Let $Y = [y_1,\cdots, y_N]^T \in \mathbb{R}^{N \times 1}$ be a given label vector. Then there exists the KP-tree structure to store the elements of $Y$ with $\mathcal{O}(N \log^2(N))$ memory and time requirements for data structure creation  such that, there is a quantum algorithm with access to the data structure which can perform the following operations in time $T = \mathcal{O}(\text{poly}\log (N))$,
    \begin{align}
       \ket{0}  \rightarrow  \frac{1}{\|Y\|}\sum_{i=1}^N y_i\ket{i} 
    \end{align}
    where $\|Y\| = \sqrt{\sum_{i=1}^N y_i^2}$.
\end{lemma}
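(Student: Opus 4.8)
The plan is to observe that this is the single-vector specialization of the KP-tree construction already recorded in Lemma~\ref{Thm:KP-tree} (formally, view $Y^T$ as a $1 \times N$ matrix), and then spell out the standard binary-tree data structure and the layered controlled-rotation state-preparation circuit, tracking the precision overheads that produce the claimed $\mathcal{O}(N\log^2 N)$ and $\mathcal{O}(\mathrm{poly}\log N)$ bounds.

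First I would set up the classical data structure. Build a complete binary tree with $\lceil \log N\rceil$ levels and $N$ leaves, indexed so that leaf $i$ corresponds to the bit string of $i$. Leaf $i$ stores the pair $(y_i^2,\ \mathrm{sgn}(y_i))$, and every internal node stores the sum of squared entries $\sum_{i \in \text{subtree}} y_i^2$ over the leaves of its subtree, so the root holds $\|Y\|^2$. Inserting a single entry $y_i$ updates only the $\mathcal{O}(\log N)$ nodes on the corresponding root-to-leaf path, and each such update is an addition on numbers held to the working precision, costing $\mathcal{O}(\log N)$; summing over the $N$ entries gives creation time and memory $\mathcal{O}(N\log^2 N)$, matching $T_{kp}$ in Lemma~\ref{Thm:KP-tree} with $d = 1$.

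Next I would describe the query circuit acting on $\lceil\log N\rceil$ address qubits initialized to $\ket{0}$. Proceeding level by level, at level $\ell$ the circuit makes one query to the data structure (in superposition over the partial path $b_1\cdots b_{\ell-1}$ currently held in the register) to read the two children's stored subtree sums, and then applies a controlled single-qubit rotation on the $\ell$-th qubit taking it to $\sqrt{p_0}\ket{0}+\sqrt{p_1}\ket{1}$, where $p_0, p_1$ are the left/right subtree sums normalized by their parent's value. After all $\lceil\log N\rceil$ levels the amplitude on $\ket{i}$ equals $|y_i|/\|Y\|$; a final pass applying a $\pm 1$ phase conditioned on the sign bit stored at leaf $i$ yields $\frac{1}{\|Y\|}\sum_{i=1}^N y_i\ket{i}$. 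Each level uses $\mathcal{O}(1)$ data-structure queries plus $\mathcal{O}(\mathrm{poly}\log N)$ elementary gates to compute the rotation angle to sufficient accuracy, so the total query time is $\mathcal{O}(\mathrm{poly}\log N)$, as claimed.

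The only real subtlety — care rather than any genuine difficulty — is the sign bookkeeping: the amplitudes produced directly from the squared (hence nonnegative) stored weights reconstruct $|y_i|$, so one must carry the sign bits through the tree and undo them with a conditional phase at the end. The rest is the routine arithmetic-precision accounting that turns the bare $\mathcal{O}(\log N)$ tree depth into the stated $\log^2$ and $\mathrm{poly}\log$ factors.
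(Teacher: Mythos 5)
Your proposal is correct and follows essentially the same route as the paper, which states this lemma as the single-vector ($d=1$) instance of the KP-tree construction of Kerenidis and Prakash (Lemma~\ref{Thm:KP-tree}) without spelling out the details; your explicit description of the binary tree of subtree sums, the layered controlled rotations, and the sign-bit phase correction is exactly the standard construction being invoked.
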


\section{Feature Weight Update
}

\subsection{Eta $\eta$ coefficient \label{appendix:eta_coeff}}

\begin{theorem} [Time complexity to classically update $\eta$ coefficient. Theorem 3 Main Text]
Given access to $\mu_{j}$,  $\mu_{jl}$, $N_{jl}$, $SS_c^{(j)}$ and $SS_T^{(j)}$ $\forall$ $j \in [d], l \in c$, calculated as a part $\boldsymbol{w}(N)$ for initial dataset $D_{N}=\{(\boldsymbol{x}_i, y_i)\}_{i=1}^{N}$ and access to the new batch of data with $N_{\text{new}}$ examples $D_{N_{\text{new}}}$, $\boldsymbol{w}(N+N_{\text{new}})$ can be computed in time  $\mathcal{O}(N_{\text{new}} d |\mathcal{M}|)$.
\end{theorem}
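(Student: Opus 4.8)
The plan is to exploit the fact that every quantity entering $w_j = \eta_j^2 = SS_c^j / SS_T^j$ is an \emph{aggregate} (second-moment) statistic that merges when two disjoint data sets are combined, so that the values for $N+N_{\text{new}}$ samples can be reconstructed from the stored values for the initial $N$ samples together with a constant number of passes over the $N_{\text{new}}$ new samples. I would therefore process the features one at a time and, within each feature $j$, carry out the merge in the order: class counts $\to$ means ($\mu_j$ and $\mu_{jl}$) $\to$ $SS_T^j$ $\to$ $SS_c^j$ $\to$ $w_j$.

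First, update counts and means. A pass over $D_{N_{\text{new}}}$ yields, for each label $l$, the new count $N_{jl}'$ and the new partial means $\mu_j'$ and $\mu_{jl}'$; then set $N_{jl}^{\text{tot}} = N_{jl} + N_{jl}'$, $\mu_j^{\text{tot}} = (N\mu_j + N_{\text{new}}\mu_j')/(N+N_{\text{new}})$, and, whenever $N_{jl}^{\text{tot}}>0$, $\mu_{jl}^{\text{tot}} = (N_{jl}\mu_{jl} + N_{jl}'\mu_{jl}')/(N_{jl}+N_{jl}')$, with the convention that a class absent from a batch contributes nothing (this is where one must be careful about labels present in one batch but not the other). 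Obtaining $\mu_{jl}'$ for all $|\mathcal{M}|$ labels by scanning the new batch once per label costs $\mathcal{O}(N_{\text{new}})$ per $(j,l)$ pair, hence $\mathcal{O}(N_{\text{new}} d |\mathcal{M}|)$ in total, which will be the dominant term.

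Second, update the two sums of squares. For $SS_T^j$ I would invoke the pairwise (parallel-axis) identity
\[
SS_T^{j,\text{tot}} = SS_T^{j} + SS_T^{j\prime} + \frac{N\,N_{\text{new}}}{N+N_{\text{new}}}\bigl(\mu_j - \mu_j'\bigr)^2 ,
\]
where $SS_T^{j\prime} = \sum_{i \in D_{N_{\text{new}}}} (x_{ij}-\mu_j')^2$ is read off from the already-scanned new batch in $\mathcal{O}(N_{\text{new}})$ time per feature (equivalently, keep $\sum_i x_{ij}^2 = SS_T^j + N\mu_j^2$ and add the new contribution). For $SS_c^j$, once the merged quantities $N_{jl}^{\text{tot}}$, $\mu_{jl}^{\text{tot}}$, $\mu_j^{\text{tot}}$ are available, I would simply re-evaluate $SS_c^{j,\text{tot}} = \sum_{l} N_{jl}^{\text{tot}} (\mu_{jl}^{\text{tot}} - \mu_j^{\text{tot}})^2$ from its definition at cost $\mathcal{O}(|\mathcal{M}|)$ per feature, and then set $w_j(N+N_{\text{new}}) = SS_c^{j,\text{tot}}/SS_T^{j,\text{tot}}$.

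Summing the costs: the per-label scans contribute $\mathcal{O}(N_{\text{new}} d |\mathcal{M}|)$, the $SS_T$ updates $\mathcal{O}(N_{\text{new}} d)$, the $SS_c$ re-evaluations $\mathcal{O}(d|\mathcal{M}|)$, and the final divisions $\mathcal{O}(d)$, so the total is $\mathcal{O}(N_{\text{new}} d |\mathcal{M}|)$ with no dependence on $N$, as claimed. The only real subtlety — and the step I expect to need the most care — is the algebra of the merge: verifying the parallel-axis decomposition of $SS_T^j$, confirming that $SS_c^j$ is a function solely of the merged class counts, class means, and overall mean (so that no finer information about the old $N$ samples is needed), and handling the bookkeeping for classes that appear in only one of the two batches. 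Everything else is a routine accounting of per-feature, per-label passes over the $N_{\text{new}}$ new points.
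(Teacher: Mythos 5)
Your proposal is correct and follows essentially the same route as the paper's proof: merge the stored aggregate statistics ($N_{jl}$, $\mu_j$, $\mu_{jl}$, $SS_T^j$) with a single pass over the new batch, then recompute $SS_c^j$ from the merged class counts and means in $\mathcal{O}(|\mathcal{M}|)$ per feature. Your use of the pairwise parallel-axis identity for $SS_T^j$ is just a repackaging of the paper's expansion in terms of $\sum_i x_{ij}^2$ (as you yourself note), and your explicit handling of classes absent from one batch is a small point of care the paper glosses over.
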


The proof consists in showing how we update each of the parameters with the $N_{\text{new}}$ examples:

\begin{enumerate}

    \item Given the stored $\mu_j$ (previously computed with $N$ examples), the mean for $N + N_{\text{new}}$ examples, named $\mu_j^{\text{tot}}$, can be computed as: $\mu_j^{\text{tot}} = \frac{N\mu_j + N_{\text{new}}\mu'_j}{N + N_{\text{new}}}$, where $\mu'_j$ is the mean of $N_{\text{new}}$ examples. It only requires the calculation of $\mu'_j$ and simple arithmetic calculations.   This operation takes time $\mathcal{O}(N_{\text{new}}d)$ for all the features. 
    \item Similarly, we can update the means of each classes and denote them as $\mu_{jl}^{\text{tot}}$ for all $j \in [d], l \in [\mathcal{M}]$ in time $\mathcal{O}(N_{\text{new}}d|\mathcal{M}|)$.
    \item Next, we can update the number of examples in each category $N_{jl}^{\text{tot}} = N_{jl} + N'_{jl}$ where $N'_{jl}$ is the number of examples in category $l$ in the set $N_{\text{new}}$. Given access to the stored $N_{jl}$, this operation only requires to calculate $N'_{jl}$.
    \item In order to compute the sum of squared deviations ($SS_T$) for $N+N_{\text{new}}$ examples, we see that,
    \begin{equation}
    \begin{split}
        SS_T^{(j,\text{tot})} &= \sum_{i=1}^{N+N_{\text{new}}} (x_{ij} - \mu_j^{\text{tot}})^2 \\
        &= \sum_{i=1}^{N+N_{\text{new}}} (x_{ij}^2 +  (\mu_j^{\text{tot}})^2 - 2x_{ij}\mu_j^{\text{tot}}) \\
        &= \sum_{i=1}^{N+N_{\text{new}}} x_{ij}^2 - (N+N_{\text{new}}) (\mu_j^{\text{tot}})^2 \\
        &= \sum_{i=1}^{N} x_{ij}^2 + \sum_{i=1}^{N_{\text{new}}} x_{ij}^2 - (N+N_{\text{new}})
        (\mu_j^{\text{tot}})^2 \\
        &= SS_T^{(j)} + N (\mu_j^{\text{tot}})^2 \\
        &+ \sum_{i=1}^{N_{\text{new}}} x_{ij}^2 - (N+N_{\text{new}})
        (\mu_j^{\text{tot}})^2 \\
    \end{split}
    \end{equation}
    From the above equation, the only thing that is unknown is $\sum_{i=1}^{N_{\text{new}}} x_{ij}^2$ which can be computed in time $\mathcal{O}(N_{\text{new}})$.
    \item Similarly $ SS_M^{(j,\text{tot})}$ can be very simply calculated in time $\mathcal{O}(|\mathcal{M}|)$ given that we have calculated $\mu_{jl}^{\text{tot}}$ and $N_{jl}^{\text{tot}}$ in the previous steps. \\
\end{enumerate}
%SS_T^{(j)} = \sum_{i=1}^N (x_{ij} - \mu_j)^2 = \sum x_{ij}^2 - N \mu_j^2$.
These calculations have to be repeated for all the $d$ features, then the complexity is $\mathcal{O}(N_{\text{new}} d |\mathcal{M}|)$ and this concludes the proof. 

% \subsubsection{Full tree growth}

% The tree keeps growing using the previous strategy until it reaches the maximum allowed depth $D$ or until a stopping criterion is achieved. The total time taken for the algorithm to reach from the root node to the last node at depth $D$ is (excluding the time to load the initial data $X, X^T, Y$ into the KP-tree and the time to compute the Pearson correlation weight vector at the root node),
% \begin{equation}
%     \begin{split}
%        T_{\text{depth}:[0\rightarrow D]} &=  T_{\text{depth}:[0\rightarrow 1]} +  \sum_{l=1}^{D-1}  T_{\text{depth}:[l\rightarrow l+1]} \\
%         &\approx  \mathcal{O}\left(\emph{poly} \log(Nd) \frac{K D k^{3D}d \log^2 (k)\log^2(p)\log^2(1/\Delta)\eta_1^2\eta_2}{\epsilon_1^2\epsilon_2}\right)
%     \end{split}
% \end{equation}

\section{Leaf Label Assignment} \label{appendix:leaf_assign}

Once the tree construction procedure reaches the final depth $D$, it consists of $k^D$ leaf nodes. Our task now is to compute the label values for each leaf node. Each leaf node $j \in [k^D]$ consists of the cluster $\mathcal{C}_{j, D}$. For the task of the regression, the label value is simply the mean of the label values of the samples in the cluster i.e.,
\begin{equation}
    \emph{label}_j = \frac{1}{|\mathcal{C}_{j, D}|} \sum_{i \in \mathcal{C}_{j,D}} y_i
\end{equation}

For the task of classification with label set $\mathcal{M} \in \{l_1,\cdots,l_{|\mathcal{M}|}\}$, for each $j^{th}$ leaf node, the relative occurrence of each class $\{p_{jl}\}_{j \in [k^D], l \in [\mathcal{M}]}$ is, 
\begin{equation}
   p_{jl} = \frac{N_{jl}}{N_{j}}
\end{equation}

where $N_{jl}$ is the number of samples in the $j^{th}$ cluster, $\mathcal{C}_{j, D}$, with label $l_{l}$ and $N_{j}$ is the total number of samples in that leaf node. Having access to these probabilities allows us to combine the results obtained from the trees in the ensemble in different ways and utilize the one that brings about the best accuracy. 

Extracting the label information for the regression time time $\mathcal{O}(poly \log(Nd))$ is provided in the Des-q algorithm \cite{kumar2023desq}. For classification,  Des-q is limited to binary classes. Our novel contribution is how to efficiently perform the label extraction in the case of multi-class classification. For this task, each tree outputs the relative occurrence of each class in each of its leaf nodes $\{p_{jl}\}_{j \in [k^D], l \in [\mathcal{M}]}$, where $p_{jl}$ is defined in Eq~\ref{eq:mean-label-class}. The $j^{th}$ leaf node of a given tree corresponds to the cluster $\mathcal{C}_{j,D}$. The estimation of the occurrence of each class among the samples in the cluster can be done if we manage to create a superposition of the indices in the cluster corresponding to each label $l \in \mathcal{M}$, i.e., the objective is to create the state $$|\xi_{jl}\rangle = \frac{1}{\sqrt{|\mathcal{C}_{jl, D}|}} \sum_{i \in \mathcal{C}_{jl,D}}\ket{i}$$ where $\mathcal{C}_{jl,D}$ corresponds to the samples in the leaf node $j$ (in $\mathcal{C}_{j,D}$) with label $l$ where $|\mathcal{C}_{jl, D}|=N_{jl}$. Subsequently, the value $p_{jl}$ is simply obtained from the inner product between the index superposition state and label superposition state $|Y\rangle = \frac{1}{\|Y\|}\sum_{i=1}^N y_i \ket{i}$,
\begin{equation}
    I(|\xi_{jl}\rangle, |Y\rangle) = \langle \xi_{jl} | Y \rangle = \frac{1}{\sqrt{N_{jl}}\|Y\|} \sum_{i \in \mathcal{C}_{jl,D}} y_i = \frac{\sqrt{N_{jl}}}{\|Y\|} l_l
    \label{eq:label-inner-class}
\end{equation}
where we used that $\sum_{i \in \mathcal{C}_{jl,D}} y_i = l_lN_{jl}$. From this one can compute,
\begin{equation}
   N_{jl} = \frac{I^2(|\xi_{jl}\rangle, |Y\rangle) \|Y\|^2}{l_l^2}
\end{equation}

We can similarly compute the above value for each $l \in [\mathcal{M}]$ and $j \in [k^D]$ and get $p_{jl} = \frac{N_{jl}}{\sum_{l\in [\mathcal{M}]}N_{jl}}$.

\subsubsection{Creation of index superposition state}

The first step is to create the superposition over the indices of the samples in the cluster $\mathcal{C}_{jl,D}$ corresponding to the leaf node $j$ and label $l$ i.e., our objective is to first create the state,
\begin{equation}
    \ket{\xi_{jl}} = \frac{1}{\sqrt{|\mathcal{C}_{jl,D}|}}\sum_{i \in \mathcal{C}_{jl,D}}\ket{i}
\end{equation}

To create the above state, our first task is to create the index superposition state $|\xi_j\rangle = \frac{1}{\sqrt{|\mathcal{C}_{j, D}|}} \sum_{i \in \mathcal{C}_{j,D}}\ket{i}$ i.e., the superposition of the elements in the cluster $\mathcal{C}_{j, D}$. The theorem below provides the time complexity of preparing such a state whose proof and methodology can be found in Equation 33 in \cite{kumar2023desq}.

\begin{theorem}
     Given quantum access to the dataset $D_N$ in time $T = \mathcal{O}(\text{poly} \log(Nd))$, the index superposition state $|\xi_j\rangle = \frac{1}{\sqrt{|\mathcal{C}_{j, D}|}} \sum_{i \in \mathcal{C}_{j,D}}\ket{i}$ for each $j \in [k^D]$ can be created in time,
     \begin{equation}
     T_{\xi} = \mathcal{O}\left(\emph{poly} \log(Nd) \frac{ Dk^{2D}d \log (|\mathcal{M}|)\log (k)\log(p)\log(1/\Delta)\eta_1}{\epsilon_1}\right)
\end{equation}
 where $\eta_1 = \text{max}_i(\|\boldsymbol{x}_i\|^2)$, and $\epsilon_1 \geq 0$.
\end{theorem}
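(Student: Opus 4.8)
The plan is to build $|\xi_j\rangle$ level by level down the unique root-to-leaf path of node $j$, reusing verbatim the weighted-distance estimation and branch-selection machinery of the supervised $q$-means construction, and then to re-run the complexity accounting of that construction with the simplifications that apply when the centroids are already fixed.

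First I would set up the induction on depth. For $d=0$ the cluster $\mathcal{C}_{\text{root},0}=[N]$, so the index state is the uniform superposition $\tfrac{1}{\sqrt N}\sum_{i=1}^{N}\ket i$, preparable in $O(\log N)$ (Hadamards, or via the KP-tree of Lemma~\ref{Thm:KP-tree}). For the inductive step, suppose the index state $|\xi_{j_d,d}\rangle$ of the depth-$d$ ancestor $j_d$ of $j$ has been prepared, and recall that the $k$ children centroids of $j_d$ are known classically from the tree construction (hence loadable in a KP-tree). I would: (i) append an ancilla and, using the quantum weighted-distance/inner-product estimation subroutine of \cite{kerenidis2019q,kumar2023desq}, coherently compute for every index $i$ in the superposition the weighted distances $\{\|\boldsymbol{x}_i-\boldsymbol{c}_l\|_w\}_{l\in[k]}$ to multiplicative precision $\epsilon_1$ with failure probability $\le\Delta$, and from them the argmin $\mathrm{child}(i)\in[k]$; (ii) postselect this register on the child of $j_d$ lying on the path to $j$, boosting the success probability $|\mathcal{C}_{j_{d+1},d+1}|/|\mathcal{C}_{j_d,d}|$ to a constant by amplitude amplification; (iii) carry along an extra $\lceil\log|\mathcal{M}|\rceil$-qubit label register (queried from $Y$) so the same state is directly usable by the leaf-label routine. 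Iterating $D$ times yields $|\xi_{j,D}\rangle=|\xi_j\rangle$.

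For the complexity, the key observation is that this procedure is exactly the Des-q tree-construction routine (Eq.~(33) of \cite{kumar2023desq}) with two simplifications: there are no $K$ refinement iterations (a single downward pass suffices, since the centroids are not being updated), and the centroid-update subroutine — together with its parameters $\eta_2,\epsilon_2$ and one of the nested $k^D$ factors — is absent. I would therefore re-use that accounting: $D$ levels, $O(k)$ distance estimates per level each costing $O(\mathrm{poly}\log(Nd)\,\eta_1\log k\log p\log(1/\Delta)/\epsilon_1)$, the nested branch-amplification over the $D$ levels contributing an extra $k^{D}$ (so $k^{2D}$ rather than the $k^{3D}$ of full construction), and the $\log|\mathcal{M}|$ from the label register, and collect everything into $T_\xi=O\!\big(\mathrm{poly}\log(Nd)\,Dk^{2D}d\log|\mathcal{M}|\log k\log p\log(1/\Delta)\eta_1/\epsilon_1\big)$.

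The part I expect to be delicate is the error/failure-probability bookkeeping across the $D$ nested postselections: one must show that choosing the per-level distance precision $\epsilon_1$ and the per-estimate failure probability $\Delta$ (and, if needed, boosting by a logarithmic number of repetitions) is enough for $|\xi_j\rangle$ to come out arbitrarily close to the ideal state with high probability overall, and — for the stated constant in the exponent — that discarding the centroid-update step genuinely removes a full factor $k^{D}$ rather than only $\sqrt{k^{D}}$ or a logarithm. The rest is the routine specialization of the Des-q analysis.
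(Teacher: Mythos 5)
The first thing to note is that the paper does not actually prove this theorem: in Appendix~C it states that ``the proof and methodology can be found in Equation 33 in [kumar2023desq]'' and moves on. So there is no in-paper argument to compare yours against; your level-by-level construction (uniform superposition at the root, coherent weighted-distance estimation and argmin against the classically known children centroids, postselection/amplification onto the branch leading to $j$) is the natural reconstruction and is almost certainly the same mechanism as in the cited Des-q analysis. In that sense your route is a genuine contribution relative to what the paper offers, which is only a citation.

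That said, the complexity accounting in your proposal does not close, and this is exactly the point you flag as ``delicate.'' Your own tally --- $D$ levels, $O(k)$ distance estimates per level, and one extra factor of $k^{D}$ from the nested branch amplification --- multiplies out to $O(Dk^{D+1})$ times the per-estimate cost, not $O(Dk^{2D})$; and the amplification overhead for postselecting $\mathcal{C}_{j,D}$ out of $[N]$ is $O\bigl(\sqrt{N/|\mathcal{C}_{j,D}|}\bigr)$, which is $O(k^{D/2})$ only under a balancedness assumption you never state (and is $O(\sqrt{N})$ in the worst case). The claim that deleting the centroid-update subroutine from the $k^{3D}$ bound of Eq.~(\ref{eq:depth0T}) removes exactly one factor of $k^{D}$ is asserted rather than derived, and it is not even internally consistent with the surviving factors: in the single-level $q$-means theorem the factors $d$, $\eta_2$ and $\epsilon_2$ all originate in the centroid-update step, yet $T_\xi$ retains the $d$ while dropping $\eta_2$ and $\epsilon_2$, which your ``remove the update step'' story cannot explain. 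Similarly, the $\log|\mathcal{M}|$ inside $T_\xi$ is unexplained by a construction of $\ket{\xi_j}$ alone (the paper separately multiplies by label-register costs when passing from $T_\xi$ to $T_{\xi_{jl}}$), so importing it via an optional label register is a guess, not a derivation. To make the proof stand on its own you would need to either (i) carry out the per-level cost and failure-probability recursion explicitly, including uncomputation of the distance-estimation garbage so that the nested amplitude amplifications are legitimate, and show it reproduces each factor of the stated bound, or (ii) honestly reduce to the specific intermediate states already constructed in the Des-q clustering analysis rather than to its final theorem statement.
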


Next, in order to create $\ket{\xi_{jl}}$, we require to leverage information from the label vector $Y$. For this, not only do we store the target label $Y$ in superposition in Lemma  \ref{lemma:suplabel}, but also we store in a binarized array fashion that allows to be efficiently queried as shown below,

 \begin{lemma}[Binarized superposition over label elements] \label{lemma:suplabel-alternate}
         Let $Y \in \mathbb{R}^{N \times 1}$ be a given label vector. Then there exists a classical data structure to store the elements of $Y$ in a binarized form with the memory and time requirement to create the data structure being $T_{Y} = \mathcal{O}(N \log |\mathcal{M}|)$ such that, there is a quantum algorithm with access to the data structure which can perform the following unitaries in time $\mathcal{O}(\log (N\log |\mathcal{M}|))$,
     \begin{align}
        \ket{i}\ket{0}  \rightarrow  \ket{i}\ket{y_i}
        \label{eq:label_single}
     \end{align}
     and also in superposition,
     \begin{align}
        \frac{1}{\sqrt{N}}\sum_{i=1}^N\ket{i}\ket{0}  \rightarrow   \frac{1}{\sqrt{N}}\sum_{i=1}^N\ket{i}\ket{y_i} 
     \label{eq:label_super}
     \end{align}
     where $\ket{y_i} = \ket{y_{i1}\cdots y_{i\lceil\log m\rceil}}$ with $y_i = \sum_{j=1}^{\lceil\log m\rceil} y_{ij}2^{\lceil\log m\rceil - j}$ and $y_{ij} = \{0,1\}, \forall i,j$. 
 \end{lemma}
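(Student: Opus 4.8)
The plan is to reduce this to the same quantum-accessible classical data structure used throughout the paper (the KP-tree of Lemma~\ref{Thm:KP-tree}), instantiated on a pre-processed, \emph{binarized} copy of the label vector rather than on real amplitudes. First I would fix a bijection between the label alphabet $\mathcal{M}=\{l_1,\dots,l_{|\mathcal{M}|}\}$ and the integers $\{1,\dots,|\mathcal{M}|\}$, and for each $i\in[N]$ replace $y_i$ by the length-$\lceil\log|\mathcal{M}|\rceil$ bit string $(y_{i1},\dots,y_{i\lceil\log|\mathcal{M}|\rceil})$ of the integer assigned to $y_i$, so that $y_i=\sum_{j}y_{ij}2^{\lceil\log|\mathcal{M}|\rceil-j}$ as in the statement. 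Writing out this binarized array is a single linear pass over the data, costing $\mathcal{O}(N\log|\mathcal{M}|)$ time and the same amount of classical memory; this already accounts for the claimed $T_Y=\mathcal{O}(N\log|\mathcal{M}|)$ bound.

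Next I would store this binarized array in the tree-structured classical memory of Kerenidis--Prakash, exactly as in Lemma~\ref{Thm:KP-tree}, but with the ``payload'' at leaf $i$ being the bit string $y_{i1}\cdots y_{i\lceil\log|\mathcal{M}|\rceil}$ instead of an amplitude. Since the structure needs to store only $N$ records of $\lceil\log|\mathcal{M}|\rceil$ bits each, its construction time and memory remain $\mathcal{O}(N\log|\mathcal{M}|)$, and the underlying tree has depth $\mathcal{O}(\log(N\log|\mathcal{M}|))$. The associated read operation is the address-controlled unitary $\ket{i}\ket{0}\mapsto\ket{i}\ket{y_i}$: conditioned on the address register $\ket{i}$ it descends the $\mathcal{O}(\log(N\log|\mathcal{M}|))$ levels of the tree and XORs the stored bits into the initially-zero output register, which makes it a genuine reversible, coherent operation of circuit depth $\mathcal{O}(\log(N\log|\mathcal{M}|))$. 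This establishes the first displayed unitary, Eq.~\eqref{eq:label_single}.

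Finally, for the superposed version Eq.~\eqref{eq:label_super} I would invoke linearity: the operation just constructed is a \emph{fixed} unitary $U$ acting on computational basis states as $U\ket{i}\ket{0}=\ket{i}\ket{y_i}$, so applying $U$ to $\frac{1}{\sqrt N}\sum_{i}\ket{i}\ket{0}$ yields $\frac{1}{\sqrt N}\sum_i\ket{i}\ket{y_i}$ with the same $\mathcal{O}(\log(N\log|\mathcal{M}|))$ cost, since no intermediate measurement is performed. The only points requiring care — and the closest thing to a real obstacle — are bookkeeping ones: checking that the payload width is exactly $\lceil\log|\mathcal{M}|\rceil$ qubits so the output register matches the statement, and that the query is truly coherent (it inherits this from the QRAM/KP-tree query model already assumed in Lemma~\ref{Thm:KP-tree}), so that it composes cleanly with the index-superposition states $\ket{\xi_{jl}}$ in the leaf-label routine. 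Beyond re-using the KP-tree construction with a discrete payload, there is no deeper mathematical content.
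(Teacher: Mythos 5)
Your proposal is correct and follows essentially the same route as the paper's proof: binarize the labels classically in $\mathcal{O}(N\log|\mathcal{M}|)$ time and memory, store them in a quantum-accessible ordered structure, and realize the lookup $\ket{i}\ket{0}\mapsto\ket{i}\ket{y_i}$ as a coherent bit-flip/XOR query of depth $\mathcal{O}(\log(N\log|\mathcal{M}|))$, with the superposed version following by linearity. If anything, you are slightly more explicit than the paper about why the query remains coherent under superposed addresses, which the paper leaves implicit in its QRAM assumption.
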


 \begin{proof}
     The proof follows from the fact that one can classically store the binary form of $y_i = y_{i1}\cdots y_{i\lceil\log |\mathcal{M}|\rceil}$ for each $i \in [N]$ in an ordered list, this requiring the memory and time to create the ordered list to be $T_Y = \mathcal{O}(N \log |\mathcal{M}|)$. Next, for each element $y_i$, one can convert it into the quantum state $\ket{y_i}$ by applying $\lceil\log |\mathcal{M}|\rceil$ quantum operations as follows,
     \begin{equation}
         \bigotimes_{j=1}^{\lceil\log |\mathcal{M}|\rceil}\ket{0} \xrightarrow[]{y_i}   \bigotimes_{j=1}^{\lceil\log |\mathcal{M}|\rceil} X^{y_{ij}} \ket{0}  = \ket{y_i}
     \end{equation}
     where $X$ is Pauli-$X$ gate. Thus one can perform the operations in Eq~\ref{eq:label_single} and Eq~\ref{eq:label_super} with time complexity $\mathcal{O}(\log (N\log |\mathcal{M}|))$. 
\end{proof}

We query the $Y$ memory structure in Lemma~\ref{lemma:suplabel-alternate} with the index superposition state $\ket{\xi_{j}}$ in time $T_Y = \mathcal{O}(\log(|\mathcal{C}_{j,D}|\log |\mathcal{M}|))$ and obtain the state,
\begin{equation}
    \ket{\xi_{j}}\otimes \ket{0} \rightarrow \frac{1}{\sqrt{|\mathcal{C}_{j,D}|}}\sum_{i \in \mathcal{C}_{j,D}}\ket{i}\ket{y_i}
    \label{eq:cluster-label-y}
\end{equation}

Since $y_i$ can only take discrete values in $\mathcal{M}$, we see that upon measuring the  $\ket{y_i}$ register a total of $\mathcal{O}(|\mathcal{M}|\log |\mathcal{M}|)$ times, we obtain, with high probability, all the states $\ket{\xi_{jl}}$. The time to create an individual state $\ket{\xi_{jl}}$ for a fixed $j$ and $l$ is $\mathcal{O}(\log |\mathcal{M}|)$. Thus, the total time to create $\ket{\xi_{jl}}$ is, 

%Thus the total time taken to create the above states is the time to create the state $\ket{\xi_{j}}$ followed by the time to query the label state Eq~\ref{eq:cluster-label-y}, $T_Y = \mathcal{O}(\log(|\mathcal{C}_{j,D}|\log |\mathcal{M}|))$, followed by the time to create each individual state $\ket{\xi_{jl}}$,

\begin{equation}
\begin{split}
     T_{\xi_{jl}} &= T_{\xi}\cdot T_Y \cdot \mathcal{O}(\log |\mathcal{M}|)\\
     &= \mathcal{O}\left(\emph{poly} \log(Nd) \frac{ Dk^{2D}d \log(|\mathcal{C}_{j,D}|\log (|\mathcal{M}|))\log (|\mathcal{M}|)\log (k)\log(p)\log(1/\Delta)\eta_1}{\epsilon_1}\right)
\end{split}
\end{equation}

\subsubsection{Leaf label assignment}

Once we have prepared the state $\ket{\xi_{jl}}$, we can estimate the quantity $N_{jl}$ and thus $p_{jl}$ by querying the state $|Y\rangle = \frac{1}{\|Y\|}\sum_{i=1}^N y_i\ket{i}$ using Lemma~\ref{lemma:suplabel} and performing the inner product estimation. The time complexity is shown in the following theorem. This is performed for each leaf node $j \in [k^D]$ and each label $l \in [\mathcal{M}]$.

\begin{theorem} [Theorem 6 Main Text. Time complexity of leaf label assignment for classification]
    Given access to the characteristic vector states $|\xi_{jl}\rangle$ $\forall j \in [k^D], l \in [\mathcal{M}]$ where each state is prepared in time $T_{\xi_{jl}}$ and the amplitude-encoded states label superposition state $|Y\rangle$ which is be prepared in time $\mathcal{O}(\text{poly} \log(N))$, there exists a quantum algorithm to obtain occurrence value $\{\overline{N_{jl}}\}_{j\in [k^D], l \in [\mathcal{M}]}$ such that $|\overline{N_{jl}} - N_{jl}| \leq \epsilon_3$, in time, 
    $$T_{\text{leaf-label}} =\mathcal{O}\left(\frac{T_{\xi_{jl}}|\mathcal{M}| k^D\eta_3}{\epsilon_3}\right)$$ 
    where $N_{jl}$ is the true occurrence value of each label in the cluster, and $\eta_3 = \|Y\|$. 
    \label{appthm:cent_update-class}
\end{theorem}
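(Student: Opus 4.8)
The plan is to express each occurrence count $N_{jl}$ through a single quantum overlap and estimate that overlap by amplitude estimation. The hypothesis hands us, as a black box, a procedure preparing $|\xi_{jl}\rangle$ in time $T_{\xi_{jl}}$ and, via Lemma~\ref{lemma:suplabel}, the amplitude-encoded label state $|Y\rangle$ in time $\mathcal{O}(\mathrm{poly}\log N)$. Starting from Eq.~\ref{eq:label-inner-class}, which states $\langle\xi_{jl}|Y\rangle = \sqrt{N_{jl}}\,l_l/\|Y\|$, and noting that the label value $l_l$ and the norm $\|Y\|=\eta_3$ are known classically once the data is loaded, it suffices to obtain an estimate $\overline{I}$ of $I := \langle\xi_{jl}|Y\rangle$ and set $\overline{N_{jl}} := \overline{I}^{\,2}\,\|Y\|^2/l_l^2$; the class probabilities then follow as $p_{jl} = \overline{N_{jl}}/\sum_{l'}\overline{N_{jl'}}$ by classical post-processing.

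First I would invoke the standard amplitude-estimation-based inner-product subroutine (as in Des-q~\cite{kumar2023desq}, following~\cite{kerenidis2019q}): combining the preparation unitaries for $|\xi_{jl}\rangle$ and $|Y\rangle$ through a Hadamard/SWAP test yields a circuit in which the probability of a flagged outcome is an affine function of $I$ (which is real and nonnegative here, since the $y_i$ are nonnegative labels), and amplitude estimation returns $\overline{I}$ with $|\overline{I}-I|\le\epsilon$ and success probability at least $1-\delta$ in time $\mathcal{O}\left((T_{\xi_{jl}}+\mathrm{poly}\log N)\log(1/\delta)/\epsilon\right)$. Since $N_{\mathrm{new}}\ll N$, the term $T_{\xi_{jl}}$ dominates $\mathrm{poly}\log N$, so this is $\mathcal{O}\left(T_{\xi_{jl}}\log(1/\delta)/\epsilon\right)$.

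The crux is the error propagation. From $\overline{N_{jl}}-N_{jl} = (\|Y\|^2/l_l^2)(\overline{I}-I)(\overline{I}+I)$ together with $|I| = \sqrt{N_{jl}}\,l_l/\|Y\|$ and $|\overline{I}|\le|I|+\epsilon$, one gets $|\overline{N_{jl}}-N_{jl}|\le(\|Y\|^2/l_l^2)\,\epsilon(2|I|+\epsilon)$, which for $\epsilon\le|I|$ is $\mathcal{O}(\epsilon\,\|Y\|\sqrt{N_{jl}}/l_l)$. Choosing $\epsilon = \Theta(\epsilon_3\,l_l/(\|Y\|\sqrt{N_{jl}}))$---equivalently, noting that $\sqrt{N_{jl}}$ equals $I$ up to the known factor $\|Y\|/l_l$ and estimating it to additive precision $\Theta(\epsilon_3)$---forces $|\overline{N_{jl}}-N_{jl}|\le\epsilon_3$ at cost $\mathcal{O}\left(T_{\xi_{jl}}\,\|Y\|/(\epsilon_3 l_l)\right) = \mathcal{O}\left(T_{\xi_{jl}}\,\eta_3/\epsilon_3\right)$ after absorbing the $\mathcal{O}(1)$ label value into the constant. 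I expect this squaring step to be the main obstacle: amplitude estimation controls the additive error on the overlap, so one must track the $\sqrt{N_{jl}}$-type factor introduced by squaring and bound it (for instance via $\sqrt{N_{jl}}\le\|Y\|/\min_{l}|l_l|$) so that the required accuracy on $N_{jl}$ still corresponds to only $\mathcal{O}(1/\epsilon_3)$ amplitude-estimation rounds.

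Finally I would repeat the estimation for all $k^D$ leaf nodes and all $|\mathcal{M}|$ labels. Boosting each run's failure probability down to $\delta/(|\mathcal{M}|k^D)$ adds only an $\mathcal{O}(\log(|\mathcal{M}|k^D/\delta))$ factor, absorbed into the $\mathcal{O}(\cdot)$, and a union bound makes all $|\mathcal{M}|k^D$ estimates simultaneously $\epsilon_3$-accurate with probability at least $1-\delta$. Summing the per-pair costs gives $T_{\text{leaf-label}} = \mathcal{O}\left(T_{\xi_{jl}}\,|\mathcal{M}|\,k^D\,\eta_3/\epsilon_3\right)$, which is the claimed bound; the per-node class probabilities $p_{jl}$ are then read off classically from the $\overline{N_{jl}}$.
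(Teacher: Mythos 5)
Your proposal takes essentially the same route as the paper: both estimate the overlap $\langle\xi_{jl}|Y\rangle$ with a Hadamard-test-style circuit followed by amplitude estimation using $\mathcal{O}(1/\epsilon)$ rounds, invert $N_{jl} = I^2\|Y\|^2/l_l^2$, and repeat over all $|\mathcal{M}|k^D$ node--label pairs. The squaring obstacle you flag is genuine, and the paper sidesteps it exactly where you hesitate — it bounds $|\overline{N_{jl}}-N_{jl}|$ by a rescaled quantity $\epsilon'_3 = \bigl(\|Y\|/\sqrt{|\mathcal{C}_{j,D}|}\bigr)\epsilon_3$ and then simply relabels $\epsilon'_3$ as $\epsilon_3$, rather than paying the extra $\sqrt{N_{jl}}$ factor that a strict additive-$\epsilon_3$ guarantee on $N_{jl}$ would require.
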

\begin{proof}
We start with the initial state,
    \begin{equation}
        \ket{\phi_l} = \ket{j}\frac{1}{\sqrt{2}}(\ket{0} + \ket{1})\ket{0}
    \end{equation}

    We query the states $|\xi_{jl}\rangle$ and $|Y\rangle$ with the index $\ket{j}$ controlled on the second register which results in the mappings $\ket{j}\ket{0}\ket{0} \rightarrow \ket{j}\ket{0}|Y\rangle$ and $\ket{j}\ket{1}\ket{0} \rightarrow \ket{j}\ket{1}|\xi_{jl}\rangle$. Thus the state after this controlled rotation operation is given by,
    \begin{equation}
        \frac{1}{\sqrt{2}}\ket{j}\left(\ket{0}|Y\rangle + \ket{1}|\xi_{jl}\rangle\right)
    \end{equation}

    Applying Hadamard operation on the third qubit results in the state,
    \begin{equation}
        \frac{1}{2}\ket{j}\left(\ket{0}(|Y\rangle + |\xi_{jl}\rangle) + \ket{1}(|Y\rangle - |\xi_{jl}\rangle)\right)
        \label{eq:evolve_state_y}
    \end{equation}

    Now, the state $\ket{1}(|Y\rangle - |\xi_{jl}\rangle)$ can be rewritten as $|z_{lj}, 1\rangle$ (by swapping the registers), and hence Eq~\ref{eq:evolve_state_y} has the following mapping,
    \begin{equation}
        \ket{j}\ket{0}\ket{0} \rightarrow \ket{j}\left(\sqrt{p(1)}|z_{lj}, 1\rangle + \sqrt{1 - p(1)}\ket{G,0}\right)
        \label{eq:method2AE}
    \end{equation}
    where $G$ is some garbage state and $p(1)$ is the probability of obtaining outcome 1 when the third register of the state in Eq~\ref{eq:evolve_state_y} is measured i.e.,
    \begin{equation}
        p(1) = \frac{1}{2}\left(1 -  I(|Y\rangle, |\xi_{jl}\rangle)\right)
        \label{eq:prob}
    \end{equation} 

    Now it is clear that this is the form of the input for amplitude estimation \cite{brassard2002quantum} algorithm where the task is to estimate the unknown coefficient $\sqrt{p(1)}$. Applying amplitude estimation results in the output state (before measurement),
    \begin{equation}
        U: \ket{j}\ket{0} \rightarrow \ket{j}\left(\sqrt{\alpha}|\overline{p(1)}, G', 1\rangle + \sqrt{1 - \alpha}|G^{''},0\rangle\right)
    \end{equation}
    where $G', G^{''}$ are garbage registers. The above procedure requires $\mathcal{O}(1/\epsilon_3)$ iterations of the unitary $U$ (and its transpose) to produce the state such that $|\overline{p(1)} - p(1)| \leq \epsilon_3$. Measuring the above state results in the estimation of $\overline{p(1)}$ with a constant probability $\alpha \geq 8/\pi^2$. 

    From this, it becomes clear that we can also get an $\epsilon_3$ estimate on the inner product $I(|Y\rangle, |\xi_{jl}\rangle)$ with $\mathcal{O}(1/\epsilon_3)$ iterations of $U$. This results in the estimation of $N_{jl}$ with accuracy, 
    \begin{equation}
        |\overline{N_{jl}} - N_{jl}| \leq \frac{\|Y\|}{\sqrt{|\mathcal{C}_{j,D}|}} \epsilon_3 = \epsilon'_3
        \label{Eq:cov_cent}
    \end{equation} 
    Denoting $\epsilon'_3$ as $\epsilon_3$, the total time required to estimate the value $N_{jl}$ is the time to load the states $|Y\rangle$ and $|\xi_j\rangle$ and the subsequent time to perform the inner product estimation between them, 
    \begin{equation}
    \begin{split}
        T_{N_{jl}} &= \mathcal{O}\left(\frac{(T_\xi + \mathcal{O}(\emph{poly} \log(N)))\|Y\|}{\sqrt{|\mathcal{C}_{j,D}|}\epsilon_3}\right) \\
        &\approx \mathcal{O}\left(\frac{T_\xi\|Y\|}{\sqrt{|\mathcal{C}_{j,D}|}\epsilon_3}\right) \\
        &\leq \mathcal{O}\left(\frac{T_\xi \|Y\|}{\epsilon_3}\right)
    \end{split} 
    \end{equation}
    where we use the fact that the size of the clusters $|\mathcal{C}_{j,D}| \geq 1$. 

    Repeating the above procedure for the rest of the leaf nodes leads to the total time complexity,
    \begin{equation}
    \begin{split}
     T_{\text{leaf-label}} &=\mathcal{O}\left(\frac{T_\xi k^D\eta_3}{\epsilon_3}\right) \\
        &\mathcal{O}\left(\emph{poly} \log(Nd) \frac{ Dk^{3D}d \log (k)\log(p)\log(1/\Delta)\eta_1\eta_3}{\epsilon_1\epsilon_3}\right)
    \end{split}
        \label{eq:leaf_maj}
    \end{equation}
    where $\eta_3 = \|Y\|$.
\end{proof}

\section{Numerical Simulations} \label{appendix:further_numerical}

\subsection{Technical Details \label{appendix:technical_details}}

For all the considered datasets, we split the data into training and test using a ratio of $0.3$ and create five or ten train-test folds, depending on the dataset. Subsequently, each fold is used to train the decision tree model and the performance is evaluated in both train and test sets. We perform standard data normalization techniques on each fold, ensuring that each feature has a mean of 0 and a standard deviation of 1. We use the k-means implementation in the library pyclustering \cite{Novikov2019}. This implementation contains a feature where the user can pass as an argument the distance metric to use. We initialized the centroids for k-means with k-means++ technique \cite{arthur2007k}. The maximum number of iterations taken for k-means algorithm for the cluster centroids convergence was set to $1000$. However, it was seen that for the datasets considered, the convergence was achieved with much less iterations. Note that each internal node of the tree is split using $k$-means, with the same value of $k$. It may occur that the data cannot be divided into $k$ clusters because of insufficient samples in the node. 

For datasets corresponding to binary classification and regression we can utilize both Pearson correlation and $\eta$ coefficient to calculate the feature weights. We report the results corresponding to the method that improves the performance the most. We have seen that for PIMA, German and Porto Seguro datasets the best performance corresponds to using $\eta$ coefficient whereas for Spambase and the Boston housing, it was Pearson correlation. During inference we aggregate the results obtained in each tree of the ensemble. For the classification task, given a sample, it is input to each tree of the ensemble and each tree outputs the probability of each of the classes corresponding to the leaf node that the sample reaches, after going through the tree. We take the mean probability across the trees. For regression, we take the mean over the labels of the samples in the leaf node, we do this for each tree and then we take the mean across the trees in the ensemble. 

\subsection{Comparison to Random Forest with Feature Bagging \label{appendix:comparison_random_forest}}
State-of-the-art random forest also does \textit{feature bagging} to construct a collection of decision trees with controlled variance. Feature bagging consists of selecting a random subset of the features at each candidate split. The reason is that if one or a few features are very strong predictors for the output labels, these features will be selected in many of trees in the ensemble, making them correlated. \algo~is very much different in this sense as we do clustering in order to make the split and for this, we consider all the features in the dataset. We assign a weight, which measures how strong predictor is for the the output label but all of the features participate in the split. For this reason, the baseline method we used does not do feature bagging. However, it is interesting to compare the best results we obtained with \algo-c to a version of random forest that does this feature bagging. 

To do this, we calculate the minimum number of internal nodes for the trees in random forest such that it is at least equal to the total number of inner nodes in the trees constructed with \algo-c. In most of the cases this number is higher that the nodes in \algo-c because the split is binary in random forest. We utilize RandomForestClassifier class implemented in scikit-learn \cite{scikit-learn}. We set the number of estimators to be $100$ and for the criterion to do the split we use entropy in classification and the square error in regression. We train and evaluate in the different folds. Note that we pass the data corresponding to the fold and we do not train each tree in the forest with the same data as we did with the baseline we used to report the results in the main text. The results can be seen in Table \ref{table:comparison_random_forest}. For all the datasets considered, we see overlap between the performance of \algo-c and random forest. Therefore, we conclude that the proposed method is competitive to this baseline too. For some datasets there is more overlap than others. Note that we do not do pruning with \algo-c. We believe that incorporating this may improve the performance even more and increase the overlap with this version of random forest.

\begin{table}[h!]
\caption{Performance in test. Comparison of the best results obtained with \algo-c and the performance of random forest implemented in scikit-learn. The total number of inner nodes in the trees are reported in each case. The metrics reported are ROC AUC for classification and RMSE for regression. }
\label{table:comparison_random_forest}
\vskip 0.15in
\begin{center}
\begin{small}
\begin{sc}
\begin{tabular}{lccccr}
\toprule
Dataset & \algo-c & nodes & random forest & nodes \\
\midrule
Spambase   & 0.936$\pm$0.023  & 7 & 0.965$\pm$0.014 &  15   \\
%German acc    & 0.721$\pm$0.014 & 2  & 3 & 0.736$\pm$ 0.009 & 11  \\
German     & 0.753$\pm$0.011 & 40 & 0.767$\pm$0.015  &  63 \\
Cars     & 1.000$\pm$0.000 & 15 & 1.000$\pm$0.000  &  31 \\
%Wine  acc    & 0.96$\pm$ 0.016 & 2 & 2 &  1.0$\pm$ 0.02  & 2 \\
Wine      & 1.0$\pm$ 0.015 & 4 &  1.0$\pm$0.002   &  7
\\
%PIMA  acc    & 0.776$\pm$0.022 & 1 & 3 & 0.748$\pm$ 0.021   & 8 \\ % here its evaluating with euclidean distance and using eta to build
PIMA      &   0.821$\pm$0.043 & 85 & 0.817$\pm$0.039  & 127\\
Boston      &   0.037$\pm$0.008
 & 31 & 0.024$\pm$0.005  & 62\\ % here its evaluating with weighted distance and trained with eta too.
%RELIEF                & $\mathcal{O}(N_{\text{new}}^2 d c)$ &  $d$-length vector \\
\bottomrule
\end{tabular}
\end{sc}
\end{small}
\end{center}
\vskip -0.1in
\end{table}

\subsection{Weighted Distance Improves Accuracy of Inference \label{appendix:weighted_distance}}

We have seen that for all the datasets considered, for a given number of clusters and depth (\textit{D}) using the weighted distance obtains either the same performance as using Euclidean distance and in some cases, it even improves it. In all the cases, the highest values of ROC AUC obtained corresponds to using the weighted distance. To show an example, we compare the performance in test when doing the inference utilizing the weighted distance with the feature weights used for training against using the Euclidean distance using the PIMA dataset in Fig. \ref{fig:performance_pima_weights}.

\begin{figure}[!h]
    \centering
    \includegraphics[width=8cm]{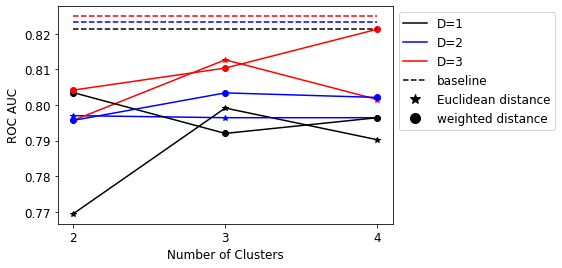}
    \caption{Performance in test. ROC AUC as a function of the number of clusters for different tree depths (\textit{D}) for an ensemble of 100 trees constructed with the PIMA dataset. }
    \label{fig:performance_pima_weights}
\end{figure}

%\subsection{Comparison between Feature Weight Techniques %\label{appendix:comparison_weights}}
%For a dataset that contains only numerical features and the task is for binary classification, both Pearson correlation and $\eta$ coefficient could be used as they both meet the assumptions of the methods. In particular, for the PIMA dataset, we compare the performance obtained with both methods for calculating the feature weights in Fig. \ref{fig:performance_pima_weights}. As mentioned in Section \ref{}, for the inference we can use different notions of distances and we could leverage the fact that we calculated and stored classically the feature weights and used the weighted distance that we used in the training stage. We compare the performance using this weighted distance and the Euclidean distance, which corresponds to using a feature weight of one for all the features.

%\begin{figure}[!h]
%    \centering
    %\subfigure[] {\includegraphics[width=4cm]%{icml/figs/ensemble_final/pima/acc_eta_vs_pea%rson.png}}\label{fig:1b}
%    \includegraphics[width=6cm]{}
%    \caption{Ensemble of 100 trees with depth 1. PIMA dataset}
%    \label{fig:performance_pima_weights}
%\end{figure}

\subsection{Threshold Tuning Improves Accuracy \label{appendix:tuning}}

For the classification task, \algo~outputs the probabilities of each class in each leaf node for trees in the ensemble. Having classical access to the probabilities allows to support multi-class classification. What is more having the probabilities enable fine tuning of the \textit{threshold} $t$. In the case of binary classification, this threshold is used such that if the fraction of samples in the leaf node with label $0$ $P(0) \geq t$, we assign the label $0$ and otherwise, we assign $1$. An example of this is the classification of credit risk as "good" or "bad". To showcase this, we use the "German" credit risk dataset \cite{misc_statlog_(german_credit_data)_144} that consists of this binary classification with a total relative accurence of label $0$ of $0.69$. We show the accuracy in test of the model constructed with \algo-c for an ensemble of $100$ trees as a function of the threshold in Fig. \ref{fig:performance_german_threshold}. It is clear that taking the naive threshold  $t=0.5$ (vertical black line), which corresponds to taking the majority vote, makes the model performance low whereas doing the fine tuning of this parameter to obtain the best threshold (vertical green line) significantly improves the accuracy.

%to output the same label for all the test samples while the ratio of the two classes in the dataset is $0.44$. Having access to the probabilities allows for fine tuning of the threshold and the relative improvement in accuracy when using the best threshold is of $11.34\%$.
 
\begin{figure}[!h]
    \centering
    {\includegraphics[width=8cm]{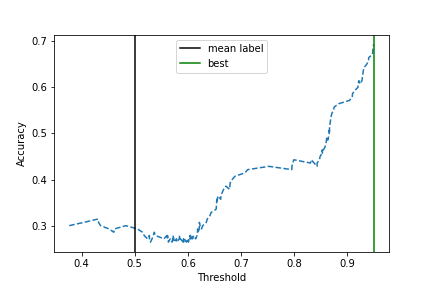}}%\label{fig:11}
    %\subfigure[] {\includegraphics[width=4cm]%{}}%\label{fig:1b}
    \caption{Accuracy in test a function of the threshold to assign label given the probabilities of each of the two classes. The vertical black line corresponds to the threshold when taking the majority vote $t=0.5$ and the vertical green line corresponds to the "best" threshold for which the accuracy is maximum. The dataset corresponds to one of the five folds created for the "German" dataset and the model is an ensemble of 100 trees with $D=3$ and $k=3$. }
    \label{fig:performance_german_threshold}
\end{figure}

\subsection{Decay of Performance for Deep Trees and Different Methods to Aggregate Results from Trees \label{appendix:decay}}

As commented in the main text, we report the results up to small depths because for higher values we see a decrease in performance in test because of overfitting. We have seen this for all the datasets considered. We believe that \algo~is very much competitive to the baseline in terms of performance when the trees are not very much deep. As an example, we show the results obtained with the PIMA dataset in Fig.\ref{fig:pima_higher_depths}. It can be seen that the ROC AUC decreases for depths higher than three. This decrease is more notorious for higher number of clusters (\textit{k}). 

What is more, given that \algo~allows for the efficient classical estimation of the probability of each of the classes in the leaf nodes for all the trees in the ensemble, we can utilize different techniques when aggregating the results for each tree when doing inference for classification. In particular, we compare two methods. The \textit{majority} method consists in the following. Given a sample to infer, each tree in the ensemble outputs the label with highest probability in the leaf node assigned to the sample and then, it computes the probability of each class across the trees. Another method is \textit{mean} where each tree outputs the probability of each class and then it takes the mean probability of each class across the trees in the ensemble. These procedures are repeated for all the samples in the test dataset. We have seen that for all the datasets considered using the \textit{mean} method significantly improves performance. As an example, we show this comparison using the PIMA dataset in Fig. \ref{fig:pima_higher_depths}. 

\begin{figure}[!h]
    \centering
    \includegraphics[width=8cm]{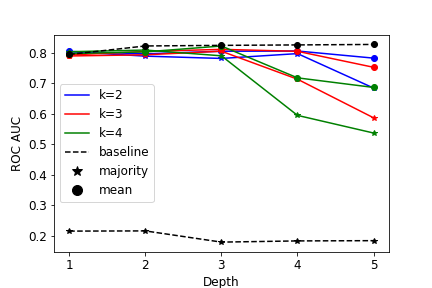}
    \caption{Performance in test. ROC AUC as a function of the tree depth for different numbers of clusters $k$ for an ensemble of $100$ trees for the PIMA dataset using $\eta$ coefficient. Given a sample to infer, \textit{majority} refers to computing the majority class in the leaf nodes assigned to the sample and then taking the distribution of probability of the assigned classes to all the samples whereas \textit{mean} refers to each tree outputting the probability of the classes in the leaf node and taking the mean probability of each of the class across the trees.}
    \label{fig:pima_higher_depths}
\end{figure}

%\subsection{Running for Big Datasets}

%For the "Portoseguro" dataset we utilized 90 cores to paralellize the construction of each of the trees in the ensemble as well as the the clustering of each of the nodes at a given depth in parallel. 

\clearpage
\onecolumngrid
%\appendix

\end{document}